\let\theoremstyle\relax 
\let\theoremstyle\relax
\theoremstyle{definition}
\newtheorem{theorem}{Theorem}
\newtheorem{assumption}{Assumption}
\newtheorem{corollary}[theorem]{Corollary}
\newtheorem{lemma}[theorem]{Lemma}
\newtheorem{definition}{Definition}
\newtheorem{remark}{Remark}
\def\RR{{\mathbb R}}
\DeclareMathOperator{\VEC}{vec}
\DeclareMathOperator{\rank}{rank}
\newcommand{\ie}{\textit{i.e.}}
\newcommand{\eg}{\textit{e.g.}}
\begin{document}

\begin{frontmatter}

\title{Data-Driven Model Reduction by Two-Sided Moment Matching
\thanksref{footnoteinfo}} 

\thanks[footnoteinfo]{This paper was not presented at any IFAC 
meeting. Corresponding author J. Mao.}

\author[London]{Junyu Mao}\ead{junyu.mao18@imperial.ac.uk},    
\author[London]{Giordano Scarciotti}\ead{g.scarciotti@imperial.ac.uk}

\address[London]{Department of Electrical and Electronic Engineering, Imperial College London, London, U.K.}  

\begin{keyword}                           
Model reduction, data-driven, moment matching, system identification, time-domain, two-sided, linear systems.               
\end{keyword}                             

\begin{abstract}                          
In this brief paper, we propose a time-domain data-driven method for model order reduction by two-sided moment matching for linear systems. 
An algorithm that asymptotically approximates the matrix product $\Upsilon \Pi$ from time-domain samples of the so-called two-sided interconnection is provided. Exploiting this estimated matrix, we determine the unique reduced-order model of order $\nu$, which asymptotically matches the moments at $2 \nu$ distinct interpolation points. Furthermore, we discuss the impact that certain disturbances and data distortions may have on the algorithm. Finally, we illustrate the use of the proposed methodology by means of a benchmark model. 
\end{abstract}

\end{frontmatter}

\section{Introduction}

Model reduction is the problem of approximating some important behaviours (\textit{e.g.} steady-state response) of a given system with a simplified description, \textit{e.g.}, a lower-order model, while some properties of interest (\eg{} stability) are preserved. For linear systems, model reduction techniques have been extensively studied, see the comprehensive survey in \textit{e.g.}, \cite{antoulas2005approximation}. Within those techniques, a popular family is the so-called moment matching methods, in which the resulting reduced-order models match the ``moments'' of the original system at specific frequencies (also known as \textit{interpolation points}), see, \textit{e.g.}, \cite{kimura1986positive, georgiou1999interpolation, byrnes2001generalized}.\\
In a more recent work \cite{astolfi2010model} introduced a time-domain characterization of moments based on the notion of steady state. This characterization brings out two important features: 1) it allows one to extend the moment matching framework to general nonlinear systems in which the classical concept of moments are not conventionally defined, see, \eg, \cite{scarciotti2017nonlinear}; 2) it opens the possibility of constructing families of reduced-order models that asymptotically match the moments over time by exploiting time-domain input/output data from certain system interconnections. In particular, \cite{scarciotti2017data} proposed a data-driven method based on 
the so-called
\textit{direct} interconnection,
obtaining parameterized reduced-order models of order $\nu$ that asymptotically match the moments at $\nu$ interpolation points.\\
In \cite{ionescu2015two} a \textit{model-based} ``\textit{two-sided}'' approach has been proposed. 
The result is a $\nu$-order model that matches moments at two sets of $\nu$ (distinct) interpolation points, thus doubling the number of matched moments with respect to the direct interconnection. In this paper, we seek a time-domain \textit{data-driven} approach to achieve the two-sided moment matching. 
A major difficulty in achieving this objective is that the two-sided interconnection requires data-driven methods for both the direct interconnection (given in \cite{scarciotti2017data}) and the so-called ``\textit{swapped}'' interconnection, which is less intuitive to manipulate.
Our main contribution is the missing piece of the puzzle of two lines of research, namely we provide a data-driven enhancement of \cite{ionescu2015two} in line with \cite{scarciotti2017data} and, respectively, a two-sided enhancement of \cite{scarciotti2017data} in line with \cite{ionescu2015two}. The resulting approach possesses the benefits of both methods: it is a data-driven method (thus it does not require any knowledge of the matrices of the system); it requires a minimum number of time-domain measurements; it has computational complexity which is only a power of the reduced order $\nu$ (thus it is computationally advantageous with respect to model-based methods); and it achieves matching at double the number of interpolation points while maintaining the same order of the reduced model (thus obtaining a better approximation of the original model).
While for reasons of space, we cannot provide a complete review of the model reduction literature at large, for the sake of completeness we mention a few related data-driven moment-based frameworks. The Loewner framework introduced by \cite{mayo2007framework}, constructs interpolants by exploiting \textit{frequency-domain} measurements. While time-domain enhancements are available, it is worth noting that, our approach is natively time-domain, rather than based on the indirect routines that infer frequency-domain information from the time-domain response. This native time-domain nature empowers this data-driven framework to be potentially extended to distributed-parameter systems, and general nonlinear systems where frequency-domain concepts are of great challenge to generalize (and in fact \cite{scarciotti2017data} proposes data-driven methods for nonlinear systems as well). Another framework to be mentioned is the Loewner framework based on system interconnections introduced by \cite{simard2021nonlinear}. This framework is more related to the approach in this brief paper than the classical Loewner framework because it is based on system interconnections. While the method has been extended to nonlinear systems, it still lacks a complete data-driven enhancement. Finally, \cite{scherpen2020data} proposed a method that allows for the exact computation of moments at given interpolation points for \textit{discrete-time} linear systems.
In summary, while the main objective of this brief paper is to finally solve the open problem laid out in \cite{scarciotti2017data} about how to leverage the free parameters of the reduced model to match double the amount of moments in a data-driven context, we also
note that related frameworks do not deal with exactly the same setting (time-domain input-output measurements of continuous-time systems with a minimal number of samples).\\
Some preliminary results have been published in \cite{mao2022model}. Specifically, that paper focused on the swapped interconnection, which is a preliminary result to solve the problem posed in this brief paper. For the sake of space, those results are only briefly recalled in ``Section 2 - Preliminaries''. The present paper focuses on developing methods for the two-sided interconnection, and consequently all results presented herein are novel.\\
The remainder of this paper is organized as follows. In Section~\ref{sec:Moment Estimation via Two System Interconnections} we recall the notion of moment matching, and revisit two data-driven approaches that estimate the matrices $C\Pi$ and $\Upsilon B$, based on the direct and swapped interconnections, respectively. Section~\ref{sec:Data-Driven Moment Matching by A Two-Sided Interconnection} presents a data-driven framework that approximates the product $\Upsilon \Pi$ asymptotically from time-domain samples of the two-sided interconnection. Moreover, we construct the reduced-order model that achieves two-sided moment matching. As an additional result, Section \ref{sec:directEstimationofYPIInverse} presents a result on the direct estimation of $(\Upsilon \Pi)^{-1}$. Furthermore, a discussion of the cases where certain signals are corrupted by noise is also provided in Section \ref{sec:Estimation under Noise Corruption}. Finally, the proposed approach is demonstrated by a building model in Section~\ref{sec:NumericalExampleTwo-sided}.\\
\textbf{Notation}
We use standard notation. $\mathbb{R}$ and $\mathbb{C}$ denote the sets of real numbers and complex numbers respectively. $\mathbb{C}_{0}$ ($\mathbb{C}_{< 0}$) denotes the set of complex numbers with zero (negative) real part. The set of non-negative integers is denoted by $\mathbb{Z}_{\geq 0}$. The identity matrix is denoted by the symbol $I$, and $\sigma(A)$ denotes the spectrum of a square matrix $A$. $A^\top$ and  $\rank (A)$ indicate the transpose and the rank of any matrix $A$, respectively. The symbol $\otimes$ represents the Kronecker product. The operator $\VEC(A)$ indicates the vectorization of a matrix $A \in \mathbb{R}^{n \times m}$, which is the $nm \times 1$ vector obtained by stacking the 
columns of the matrix $A$ one on top of the other.
For a vector $x$, $||x||$ denotes its Euclidean norm. 
$\iota$ denotes the imaginary unit. 

\section{Preliminaries}
\label{sec:Moment Estimation via Two System Interconnections}
In this section, we recall two approaches to estimate moments from input/output data, as proposed in \cite{scarciotti2017data} and \cite{mao2022model} respectively. We end this section by revisiting the model-based two-sided moment matching theory, as proposed in \cite{ionescu2015two}. 

\subsection{Moment Estimation via Direct and Swapped Interconnections}
Consider a linear, single-input single-output (SISO), continuous-time system of the form
\begin{equation} \label{eq:fullOrderSystem}
    \dot{x} = Ax + Bu, \qquad y = Cx,
\end{equation}
with state $x(t) \in \mathbb{R}^{n}$, input $u(t) \in \mathbb{R}$, output $y(t) \in \mathbb{R}$, $A\in \mathbb{R}^{n\times n}$, $B\in \mathbb{R}^{n\times 1}$ and $C\in \mathbb{R}^{1\times n}$. The associated transfer function is $W(s) = C(sI - A)^{-1}B$. Assume system~(\ref{eq:fullOrderSystem}) is minimal, \textit{i.e.}, both controllable and observable. The moments of system~(\ref{eq:fullOrderSystem}) are defined as follows. 

\begin{definition} \label{def:moment}
The $0$-moment of system~(\ref{eq:fullOrderSystem}) at $s_i\in \mathbb{C} \setminus \sigma(A)$ is the complex number $\eta_0(s_i) = W(s_i)$. The $k$-moment of system~(\ref{eq:fullOrderSystem}) at $s_i$ is the complex number $\eta_k(s_i)=\frac{(-1)^{k}}{k!} \left[ \frac{d^k}{ds^k} W(s)\right]_{s = s_i}$, where $k \geq 1$ is an integer. 
\end{definition}


Given two sets of interpolation points $\mathcal{I}_1 = \{s_1, s_2, \cdots, s_\nu\}$ $\subset \mathbb{C} \setminus \sigma(A)$ and  $\mathcal{I}_2 =\{s_{\nu+1}, s_{\nu+2}, \cdots, s_{2\nu}\}$ $\subset \mathbb{C} \setminus \sigma(A)$. 
Suppose that the matrices $S$ and $Q$ are such that $\sigma(S) = \mathcal{I}_1$ and $\sigma(Q) = \mathcal{I}_2$. \cite{astolfi2010model, astolfi2010modelprojection} have shown that the moments at the interpolation points in $\mathcal{I}_1$ and $\mathcal{I}_2$ are in a one-to-one relation with the elements of $C\Pi$ and $\Upsilon B$, respectively, where $\Pi \in \mathbb{R}^{n \times \nu}$ and $\Upsilon \in \mathbb{R}^{\nu \times n}$ are the unique solutions of the Sylvester equations
\begin{subequations} \label{eq:sylvesterEquations}
\begin{align}
A \Pi + BL &= \Pi S,  \label{eq:sylvesterEquationDirect} \\
 Q\Upsilon &= \Upsilon A + RC.  \label{eq:sylvesterEquationSwapped}
\end{align}
\end{subequations}
This one-to-one relation enables $C\Pi$ ($\Upsilon B$, respectively) to play the role of moments as they can be used to construct families of reduced-order models that interpolate the moments at $\mathcal{I}_1$ ($\mathcal{I}_2$, respectively). \cite{astolfi2010model, astolfi2010modelprojection} has also noted that, under additional assumptions, the moments are also in a one-to-one relation with the steady-state responses (provided they exist) of interconnections between the system and certain ``signal generators''. First, consider the signal generator 
\begin{equation} \label{eq:signalGeneratorS}
\dot \omega = S \omega, \qquad \theta = L \omega,
\end{equation}
with $\omega(t) \in \mathbb{R}^{\nu}$, $u(t) \in \mathbb{R}$, $S\in \mathbb{R}^{\nu\times\nu}$ and $L \in \mathbb{R}^{1 \times \nu}$, and the interconnection between this generator and system~(\ref{eq:fullOrderSystem}), namely
\begin{equation} \label{eq:directInterconnection}
    \dot \omega  = S \omega, \qquad \dot{x} = Ax + BL\omega, \qquad y = Cx.
\end{equation}
This interconnection is called \textit{direct interconnection}. Under certain technical assumptions, the moments at $\mathcal{I}_1$ are in a one-to-one relation with the steady state $y_{ss}$ of \eqref{eq:directInterconnection}. Next, consider the signal generator 
\begin{equation} \label{eq:signalGenerator}
    \dot{\varpi} = Q \varpi + R \kappa, \qquad d = \varpi + \Upsilon x,
\end{equation}
with $\varpi(t) \in \mathbb{R}^{\nu}$, $\kappa(t) \in \mathbb{R}$, $d(t) \in \mathbb{R}^{\nu}$, $Q \in \mathbb{R}^{\nu \times \nu}$, $R \in \mathbb{R}^{\nu \times 1}$ and $\Upsilon \in \mathbb{R}^{\nu \times \nu}$, and the interconnection between this generator and system~(\ref{eq:fullOrderSystem}), namely
\begin{equation} \label{eq:swappedInterconnection}
    \dot{x} = Ax + Bu, \quad \dot \varpi = Q \varpi + R C x, \quad d = \varpi + \Upsilon x, 
\end{equation}
with $u = \delta_0$, where $\delta_0$ indicates the Dirac-delta. This interconnection is called \textit{swapped interconnection}. Under certain technical assumptions, the moments at $\mathcal{I}_2$ are in a one-to-one relation with the steady state $d_{ss}$ of \eqref{eq:swappedInterconnection}.\\
The one-to-one relations between moments, matrices $C \Pi$ and $\Upsilon B$, and steady states $y_{ss}$ and $d_{ss}$ underpin two approaches (proposed in \cite{scarciotti2017data} and \cite{mao2022model} respectively) to estimate the matrices $C \Pi$ and $\Upsilon B$ from time-domain measurements. In Theorems \ref{theorem:computeCPI} and \ref{theorem:computeYB} below we recall these two approaches. To streamline the presentation, we first introduce the following assumptions. 

\begin{assumption} \label{assumption:minimal}
For the signal generator~(\ref{eq:signalGeneratorS}), the pair $(S,L)$ is observable. For the signal generator~(\ref{eq:signalGenerator}), the pair $(Q,R)$ is controllable. 
\end{assumption}

\begin{assumption} \label{assumption:SSPI}
System~(\ref{eq:fullOrderSystem}) is asymptotically stable, \textit{i.e.}, $\sigma(A) \subset \mathbb{C}_{<0}$. The matrix $S$ has simple eigenvalues with $\sigma(S) \subset \mathbb{C}_0$. The pair $(S,\omega_0)$ is excitable\footnote{See \cite{padoan2016geometric} for the definition of excitable pair.}.
\end{assumption}

\begin{assumption} \label{assumption:SSY}
System~(\ref{eq:fullOrderSystem}) is asymptotically stable. The matrix $Q$ has simple eigenvalues with $\sigma(Q) \subset \mathbb{C}_0$. The initial condition $x(0)$ is $0$.
\end{assumption}
Note that under the asymptotic stability of system (\ref{eq:fullOrderSystem}), the assumption on the zero initial condition of the state $x(t)$ is without loss of generality. 

\begin{theorem}[\cite{scarciotti2017data}] \label{theorem:computeCPI}
Consider the interconnection (\ref{eq:directInterconnection}) and suppose Assumptions \ref{assumption:minimal} and \ref{assumption:SSPI} hold. Define the time-snapshot matrices $\tilde{R}_k \in \mathbb{R}^{w \times \nu}$ and $\tilde{\Gamma}_k \in \mathbb{R}^{w}$, with $w \geq \nu$, as
$$
\tilde{R}_k = \left[\omega(t_{k- w + 1}) \; \hdots \; \omega(t_{k-1}) \; \omega(t_{k}) \right]^\top
$$
and 
$$
\tilde{\Gamma}_k = 
\left[
y(t_{k- w + 1}) \; \hdots \;  y(t_{k-1}) \; y(t_{k}) 
\right]^\top,
$$
respectively. Then,
\begin{equation} \label{eq:computeCPI}
    \VEC(\widetilde{C \Pi}_k) := (\tilde{R}_k^\top \tilde{R}_k)^{-1} \tilde{R}_k^\top \tilde{\Gamma}_k,
\end{equation}
is an online estimate of $C \Pi$, namely there exists a sequence $\{t_k\}$ such that $\widetilde{C \Pi}_k$ is well-defined for all $k$ and $\lim_{k \to \infty} \widetilde{C \Pi}_k = C \Pi$.
\end{theorem}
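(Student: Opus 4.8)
The plan is to reduce the convergence claim to two facts: that the output $y$ of the direct interconnection \eqref{eq:directInterconnection} tends exponentially to its steady state $C\Pi\,\omega(t)$, and that a sampling schedule can be chosen along which the least-squares snapshot matrix $\tilde R_k$ stays uniformly well-conditioned while the samples are taken ever further in time.

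\emph{Step 1: transient of the direct interconnection.} First I would set $e(t):=x(t)-\Pi\,\omega(t)$. Differentiating and using \eqref{eq:directInterconnection}, \eqref{eq:signalGeneratorS} and the Sylvester equation \eqref{eq:sylvesterEquationDirect}, i.e. $A\Pi+BL=\Pi S$, one gets $\dot e = Ax+BL\omega-\Pi S\omega = A(x-\Pi\omega)=Ae$, hence $e(t)=e^{At}e(0)$. Since $\sigma(A)\subset\mathbb{C}_{<0}$ by Assumption \ref{assumption:SSPI}, there are $M\ge1$, $\lambda>0$ with $\|e(t)\|\le Me^{-\lambda t}\|e(0)\|$, so $y(t)=Cx(t)=C\Pi\,\omega(t)+Ce(t)$ and $|Ce(t)|\to0$ exponentially.

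\emph{Step 2: affine rewriting of the estimate.} Since $C\Pi\in\RR^{1\times\nu}$ we have $\VEC(C\Pi)=(C\Pi)^\top$, and the $j$-th row of $\tilde R_k$ is $\omega(t_j)^\top$. Stacking $y(t_j)=\omega(t_j)^\top(C\Pi)^\top+Ce(t_j)$ over $j=k-w+1,\dots,k$ gives $\tilde\Gamma_k=\tilde R_k\,\VEC(C\Pi)+\epsilon_k$ with $\epsilon_k:=[\,Ce(t_{k-w+1})\ \cdots\ Ce(t_k)\,]^\top$. If $\tilde R_k$ has full column rank $\nu$, substituting into \eqref{eq:computeCPI} yields
\[
\VEC(\widetilde{C\Pi}_k)=\VEC(C\Pi)+(\tilde R_k^\top\tilde R_k)^{-1}\tilde R_k^\top\epsilon_k ,
\]
and since $\bigl\|(\tilde R_k^\top\tilde R_k)^{-1}\tilde R_k^\top\bigr\|=1/\sigma_{\min}(\tilde R_k)$, the claim follows once a sequence $\{t_k\}$ is produced with $t_{k-w+1}\to\infty$ and $\sigma_{\min}(\tilde R_k)\ge c>0$ for all $k$: indeed then $\|\VEC(\widetilde{C\Pi}_k)-\VEC(C\Pi)\|\le\|\epsilon_k\|/c$, and $\|\epsilon_k\|\le\sqrt{w}\max_j|Ce(t_j)|\to0$ by Step 1, which also gives $\widetilde{C\Pi}_k\to C\Pi$ because $\VEC$ is an isometry.

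\emph{Step 3: the sampling sequence, and the main obstacle.} Under Assumption \ref{assumption:SSPI} the matrix $S$ has simple eigenvalues in $\mathbb{C}_0$ and $(S,\omega_0)$ is excitable, so $\omega(t)=e^{St}\omega_0$ is a bounded, nonzero, quasi-periodic signal whose orbit spans $\RR^\nu$ (because $\mathrm{span}\{\omega(t):t\ge0\}\supseteq\mathrm{span}\{S^j\omega_0:j\ge0\}=\RR^\nu$), and such signals are persistently exciting. The plan is to fix instants $\tau_1<\dots<\tau_w$ for which the snapshot matrix built from $\omega(\tau_1),\dots,\omega(\tau_w)$ has $\sigma_{\min}=2c>0$, and then to use the recurrence of $\omega$ to translate this whole block by a sequence of approximate periods $T_\ell\to\infty$, setting the $t_k$ accordingly; by continuity of singular values the conditioning degrades by at most $c$, so $\sigma_{\min}(\tilde R_k)\ge c$ for all $k$ (in particular $\widetilde{C\Pi}_k$ is well defined) while $t_{k-w+1}\to\infty$. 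I expect this to be the delicate step: the qualitative hypotheses of Assumption \ref{assumption:SSPI} must be turned into a schedule that is at once unbounded, so the exponentially decaying transient is sampled ever further out, and uniformly conditioned, so $(\tilde R_k^\top\tilde R_k)^{-1}$ cannot blow up and cancel the decay of $\epsilon_k$. The recurrence of the quasi-periodic $\omega$ is precisely what reconciles the two demands; carefully establishing persistence of excitation and handling windows that straddle two translated blocks is the only genuinely technical part of the argument.
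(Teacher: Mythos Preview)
Your proposal is correct and follows the same skeleton the paper uses. Note, however, that the paper does not give its own proof of this theorem: it is quoted verbatim from \cite{scarciotti2017data} in the Preliminaries section. The closest proof in the paper is that of Lemma~\ref{theorem:computeYPI}, which has the identical structure: rewrite the data vector as $\tilde R_k\,\VEC(C\Pi)$ plus an exponentially decaying remainder (your Steps~1--2), then argue that along a suitable sequence $\{t_k\}$ the snapshot matrix has full column rank and the pseudoinverse stays bounded, so the remainder term is driven to zero.

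The only substantive difference is your Step~3. The paper (in the proof of Lemma~\ref{theorem:computeYPI}) dispatches this step by invoking \cite[Lemma~3]{padoan2016geometric}, which guarantees under Assumption~\ref{assumption:SSPI} the existence of a sequence $\{t_k\}\to\infty$ along which $\tilde R_k$ has rank $\nu$, and then asserts boundedness of $(\tilde R_k^\top\tilde R_k)^{-1}\tilde R_k^\top$ from boundedness of $\omega$. You instead sketch a direct construction via recurrence of the quasi-periodic orbit, explicitly targeting a uniform lower bound $\sigma_{\min}(\tilde R_k)\ge c>0$. Your formulation is in fact the more careful one: mere full rank at each $k$ together with bounded $\omega$ does not by itself bound the pseudoinverse, so the uniform conditioning you insist on is exactly what is needed, and is what the cited lemma in \cite{padoan2016geometric} ultimately delivers.
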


\begin{theorem}[\cite{mao2022model}] \label{theorem:computeYB}
Consider the interconnection~(\ref{eq:swappedInterconnection})  with $\varpi(0) = 0$ and $u = \delta_0$. Suppose Assumptions \ref{assumption:minimal} and \ref{assumption:SSY} hold. Then, 
\begin{equation} \label{eq:computeMomentSingalPoint}
    \widetilde{\Upsilon B}_i := e^{-Q t_i} \varpi(t_i)
\end{equation}
is an online estimate of $\Upsilon B$, namely there exists a sequence $\{t_i\}$ such that  
$
    \lim_{i \to \infty} \widetilde{\Upsilon B}_i = \Upsilon B. 
$
Moreover, the transient error $\epsilon_{\Upsilon B}^i := \Upsilon B - \widetilde{\Upsilon B}_i$ at time $t_i$ is
\begin{equation} \label{eq:errorYB}
    \epsilon_{\Upsilon B}^i = e^{-Qt_i}\Upsilon e^{At_i} B.
\end{equation}
\end{theorem}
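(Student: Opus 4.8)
The plan is to derive a closed-form expression for the signal-generator state $\varpi(t)$ of the swapped interconnection~\eqref{eq:swappedInterconnection} and then read off both the transient error and its limit directly from that expression.

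First I would compute the plant response to the impulsive input. Since $x(0)=0$ and $u=\delta_0$, the state of~\eqref{eq:fullOrderSystem} is $x(t)=e^{At}B$ for $t>0$ (the Dirac delta resets the initial condition to $B$). Substituting $Cx(t)=Ce^{At}B$ into $\dot\varpi=Q\varpi+RCx$ with $\varpi(0)=0$ and using variation of constants gives
\[
\varpi(t)=e^{Qt}\int_0^t e^{-Q\tau}RC\,e^{A\tau}B\,d\tau .
\]

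The key step is to evaluate this convolution in closed form via the swapped Sylvester equation~\eqref{eq:sylvesterEquationSwapped}. Differentiating $e^{-Q\tau}\Upsilon e^{A\tau}$ and using $\Upsilon A-Q\Upsilon=-RC$ (which is just~\eqref{eq:sylvesterEquationSwapped} rearranged) yields
\[
\frac{d}{d\tau}\bigl(e^{-Q\tau}\Upsilon e^{A\tau}\bigr)=e^{-Q\tau}(\Upsilon A-Q\Upsilon)e^{A\tau}=-e^{-Q\tau}RC\,e^{A\tau},
\]
so the integral equals $\Upsilon B-e^{-Qt}\Upsilon e^{At}B$. Hence $\varpi(t)=e^{Qt}\Upsilon B-\Upsilon e^{At}B$, and left-multiplying by $e^{-Qt_i}$ gives $\widetilde{\Upsilon B}_i=e^{-Qt_i}\varpi(t_i)=\Upsilon B-e^{-Qt_i}\Upsilon e^{At_i}B$, which is precisely the claimed identity $\epsilon_{\Upsilon B}^i=e^{-Qt_i}\Upsilon e^{At_i}B$.

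For convergence I would bound $\|\epsilon_{\Upsilon B}^i\|\le\|e^{-Qt_i}\|\,\|\Upsilon\|\,\|e^{At_i}B\|$. By Assumption~\ref{assumption:SSY}, $Q$ has simple eigenvalues on $\mathbb{C}_0$, so $e^{-Qt}$ is bounded uniformly in $t\ge0$; and by asymptotic stability of~\eqref{eq:fullOrderSystem}, $\|e^{At}B\|\to0$ as $t\to\infty$. Thus $\epsilon_{\Upsilon B}^i\to0$ along any divergent sequence $\{t_i\}$, which a fortiori proves the existence claim (no genericity of the sampling times is needed here, unlike in Theorem~\ref{theorem:computeCPI}). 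I expect the only non-routine step to be spotting the Sylvester-equation identity that collapses the convolution; the impulse-response computation and the final norm bound are standard, and the controllability part of Assumption~\ref{assumption:minimal} is not used in the convergence argument itself.
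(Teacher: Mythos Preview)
Your argument is correct. The paper does not include a proof of this theorem; it is recalled verbatim from \cite{mao2022model} as a preliminary. The closest the paper comes to a derivation is in Remark~\ref{remark:dSystem} and Section~\ref{sec:Estimation under Noise Corruption}, where the authors observe that the signal $d=\varpi+\Upsilon x$ satisfies $\dot d=Qd+\Upsilon Bu$ with $d(0)=0$ (a direct consequence of~\eqref{eq:sylvesterEquationSwapped}), so that for $u=\delta_0$ one has $d(t)=e^{Qt}\Upsilon B$ and hence $\varpi(t)=d(t)-\Upsilon x(t)=e^{Qt}\Upsilon B-\Upsilon e^{At}B$. This is the same identity you obtain, reached by first passing to the aggregate variable $d$ rather than by integrating the convolution for $\varpi$ directly; either route rests on the same Sylvester-equation cancellation you identified, and both lead immediately to~\eqref{eq:errorYB} and the convergence claim. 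Your remark that controllability of $(Q,R)$ is not used in the convergence argument is also accurate: it is needed only for the one-to-one correspondence between $\Upsilon B$ and the moments, not for the estimate itself.
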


\subsection{Two-Sided Moment Matching}
Consider the signal generators (\ref{eq:signalGeneratorS}) and (\ref{eq:signalGenerator}), and the interconnection between these generators and system~(\ref{eq:fullOrderSystem}),
\begin{equation} \label{eq:twosidedInterconnection}
\begin{split} 
     \dot{\omega} = S \omega, \qquad \qquad \,
     \dot{x} = Ax + BL\omega, \,\, \\
     \dot \varpi = Q \varpi + R C x,  \qquad
     d = \varpi + \Upsilon x.
\end{split}
\end{equation}
This interconnection is called \textit{two-sided interconnection} and is schematically represented in Fig.~\ref{fig:twosidedInterconnections}. In the following we briefly recall the theory of two-sided moment matching as proposed in \cite{ionescu2015two}. To ensure the uniqueness of the solution for \eqref{eq:sylvesterEquations} and to match the moments at as many interpolation points as possible, we formalize the required assumption compactly.
\begin{assumption} \label{assumption:invertible}
The matrices $S, Q$ and $A$ have no common eigenvalues, \ie,
$\sigma(S) \cap \sigma(A) = \emptyset,  \sigma(Q) \cap \sigma(A) = \emptyset, \sigma(S) \cap \sigma(Q) = \emptyset$.
\end{assumption}

In addition, a technical assumption that ensures the existence of the two-sided reduced-order model is also introduced as follows. 
\begin{assumption} \label{assumption:invertible2}
The matrix $\Upsilon \Pi$ is invertible. 
\end{assumption}

\begin{figure}[tpb]
    \centering
    \includegraphics[scale=0.24]{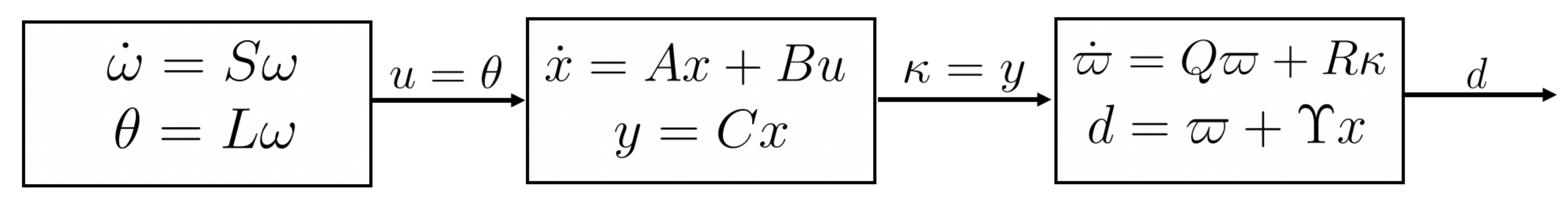}
    \caption{Diagrammatic illustration of the \textit{two-sided interconnection}.}
    \label{fig:twosidedInterconnections}
\end{figure}

\begin{theorem}[\cite{ionescu2015two}] \label{theorem:reducedOrderModelforTwo-sided}
Consider system (\ref{eq:fullOrderSystem}). Let $\Pi$ and  $\Upsilon$ be the unique solutions of \eqref{eq:sylvesterEquations}.  Suppose Assumptions~\ref{assumption:minimal}, \ref{assumption:invertible} and \ref{assumption:invertible2} hold.  
Then, there exists a unique ($\nu$-order) model that matches the moments of system (\ref{eq:fullOrderSystem}) at $\sigma(S)$ and $\sigma(Q)$ simultaneously, namely
\begin{equation} \label{eq:reducedOrderModelTwo-sided1}
    \dot{\xi} = (Q - RH)\xi + \Upsilon B u, \qquad \psi = H \xi,
\end{equation}
if and only if $H = C \Pi (\Upsilon \Pi)^{-1}$, or, equivalently via similarity transformation,
\begin{equation} \label{eq:reducedOrderModelTwo-sided2}
    \dot{\xi} = (S - GL)\xi + G u, \qquad \psi = C \Pi \xi,
\end{equation}
if and only if $G = (\Upsilon \Pi)^{-1} \Upsilon B$.

\end{theorem}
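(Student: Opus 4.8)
The plan is to reduce the statement to the intersection of the two already-known one-sided moment-matching families and then to pin down the remaining free parameter. Recall from \cite{astolfi2010model,astolfi2010modelprojection} (and, for the swapped side, \cite{mao2022model}) that, because $(S,L)$ is observable and $(Q,R)$ is controllable (Assumption~\ref{assumption:minimal}), every $\nu$-order model matching the moments of \eqref{eq:fullOrderSystem} at $\sigma(S)$ is, up to a change of coordinates, of the form \eqref{eq:reducedOrderModelTwo-sided2} with $G$ free, and every $\nu$-order model matching at $\sigma(Q)$ is, up to coordinates, of the form \eqref{eq:reducedOrderModelTwo-sided1} with $H$ free; moreover a $\nu$-order model $(F,\hat B,\hat C)$ matches the moments at $\sigma(S)$ (respectively $\sigma(Q)$) if and only if the unique solution $\hat\Pi$ of $F\hat\Pi+\hat B L=\hat\Pi S$ satisfies $\hat C\hat\Pi=C\Pi$ (respectively the unique solution $\hat\Upsilon$ of $Q\hat\Upsilon=\hat\Upsilon F+R\hat C$ satisfies $\hat\Upsilon\hat B=\Upsilon B$). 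These characterizations are well posed precisely when $\sigma(F)$ is disjoint from $\sigma(S)$ (respectively $\sigma(Q)$) — a property that matching itself forces, since an interpolation point cannot be a pole of the reduced model.

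The algebraic backbone is obtained by left-multiplying \eqref{eq:sylvesterEquationDirect} by $\Upsilon$, right-multiplying \eqref{eq:sylvesterEquationSwapped} by $\Pi$, and subtracting, which gives
\begin{equation} \label{eq:keyIdentity}
  Q(\Upsilon\Pi)-(\Upsilon\Pi)S = RC\Pi-\Upsilon B L .
\end{equation}
By Assumption~\ref{assumption:invertible2} the matrix $T:=\Upsilon\Pi$ is invertible, so $H:=C\Pi T^{-1}$ and $G:=T^{-1}\Upsilon B$ are well defined; substituting $RC\Pi=RHT$ and $\Upsilon B=TG$ into \eqref{eq:keyIdentity} yields $T(S-GL)=(Q-RH)T$. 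Together with $C\Pi T^{-1}=H$ and $T^{-1}\Upsilon B=G$, this shows that the coordinate change $\zeta=T\xi$ maps \eqref{eq:reducedOrderModelTwo-sided2} (with this $G$) onto \eqref{eq:reducedOrderModelTwo-sided1} (with this $H$). Hence the two candidate models are the same system written in different coordinates, which proves the ``equivalently'' part and reduces the theorem to showing that \eqref{eq:reducedOrderModelTwo-sided1} matches at $\sigma(S)$ and $\sigma(Q)$ if and only if $H=C\Pi(\Upsilon\Pi)^{-1}$.

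I would then carry this out as follows. Taking $\hat\Upsilon=I$ in its dual Sylvester equation shows that \eqref{eq:reducedOrderModelTwo-sided1} matches at $\sigma(Q)$ for every $H$, so only matching at $\sigma(S)$ is in question. Let $\hat\Pi$ solve $(Q-RH)\hat\Pi+\Upsilon B L=\hat\Pi S$; using \eqref{eq:keyIdentity} one finds that $\Delta:=\Upsilon\Pi-\hat\Pi$ obeys
\begin{equation} \label{eq:deltaEq}
  (Q-RH)\Delta-\Delta S = R\bigl(C\Pi-H\Upsilon\Pi\bigr).
\end{equation}
If the model matches at $\sigma(S)$, i.e. $H\hat\Pi=C\Pi$, then $H\Delta=H\Upsilon\Pi-C\Pi$, so the rank-one term $-RH\Delta$ on the left of \eqref{eq:deltaEq} exactly cancels the right-hand side and leaves the homogeneous Sylvester equation $Q\Delta-\Delta S=0$; since $\sigma(Q)\cap\sigma(S)=\emptyset$ by Assumption~\ref{assumption:invertible}, this forces $\Delta=0$, hence $\hat\Pi=\Upsilon\Pi$ and then $H=C\Pi(\Upsilon\Pi)^{-1}$ by Assumption~\ref{assumption:invertible2}. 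Conversely, for $H=C\Pi(\Upsilon\Pi)^{-1}$ the right-hand side of \eqref{eq:deltaEq} vanishes, so $\hat\Pi=\Upsilon\Pi$ (invertible) solves the Sylvester equation and $H\hat\Pi=C\Pi$, giving matching at $\sigma(S)$ as well. The analogous reasoning applied to the direct family \eqref{eq:reducedOrderModelTwo-sided2} — for which matching at $\sigma(S)$ is automatic and matching at $\sigma(Q)$ is the constraint — yields $G=(\Upsilon\Pi)^{-1}\Upsilon B$, consistently the image of $H=C\Pi(\Upsilon\Pi)^{-1}$ under $T$. Uniqueness is then immediate: any $\nu$-order model matching at both point sets lies, up to coordinates, in the family \eqref{eq:reducedOrderModelTwo-sided1}, and the argument above shows its parameter must equal $C\Pi(\Upsilon\Pi)^{-1}$.

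I expect the ``only if'' direction to be the crux: excluding every parameter value other than the announced one. The mechanism that makes it tractable is identity \eqref{eq:keyIdentity} together with the observation that imposing the output-matching condition turns the rank-one feedback term in the perturbed Sylvester equation \eqref{eq:deltaEq} into a cancellation, collapsing it to a homogeneous Sylvester equation where Assumption~\ref{assumption:invertible} applies. A secondary point to be handled carefully is verifying that the spectral-separation conditions under which the one-sided moment-matching characterizations are well posed — namely that $\sigma(S-GL)=\sigma(Q-RH)$ be disjoint from $\sigma(S)$ and $\sigma(Q)$ — hold at the selected parameter; for the ``only if'' branch this is automatic from matching, while for the ``if'' branch it is a well-posedness requirement of the two-sided construction that must be recorded.
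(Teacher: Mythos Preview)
The paper does not prove this theorem: it is recalled from \cite{ionescu2015two} as a preliminary result and no proof is given in the present manuscript, so there is no ``paper's own proof'' to compare against. Your argument is a correct and self-contained proof of the statement. The derivation of the identity $Q(\Upsilon\Pi)-(\Upsilon\Pi)S=RC\Pi-\Upsilon BL$ from \eqref{eq:sylvesterEquations}, the resulting similarity $T(S-GL)=(Q-RH)T$ with $T=\Upsilon\Pi$, and the reduction of the ``only if'' direction to the homogeneous Sylvester equation $Q\Delta-\Delta S=0$ via the cancellation of the rank-one term are all sound, and you have correctly flagged the spectral-separation conditions as the one point that needs to be recorded for well-posedness in the ``if'' branch.
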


Theorem \ref{theorem:reducedOrderModelforTwo-sided} presents sufficient and necessary conditions for a model of order $\nu$ to match the moments of system (\ref{eq:fullOrderSystem}) at $2\nu$ distinct interpolation points characterized by $S$ and $Q$ jointly. 
Given the matrices $S$ and $Q$, which characterize the sets of interpolation points, it can be noted that the parametrization (\ref{eq:reducedOrderModelTwo-sided1}), or equivalently (\ref{eq:reducedOrderModelTwo-sided2})\footnote{For brevity, in the remainder of this paper we will only refer to (\ref{eq:reducedOrderModelTwo-sided1}) as a representative of the unique model in Theorem \ref{theorem:reducedOrderModelforTwo-sided} due to the equivalence between (\ref{eq:reducedOrderModelTwo-sided1}) and (\ref{eq:reducedOrderModelTwo-sided2}).}, is fully determined by three ``moment-related'' matrices $C \Pi, \Upsilon B$ and $\Upsilon \Pi$. Since the data-driven estimation of $C \Pi$ and $\Upsilon B$ has been addressed as recalled in Section \ref{sec:Moment Estimation via Two System Interconnections}, the crucial problem of estimating $\Upsilon \Pi$, or $(\Upsilon\Pi)^{-1}$, from data
remains to be the only open problem.

\begin{remark}
One of the main difficulties of this problem is that the signal $d$ is not available for measurement. In fact, $d$ depends on $x$ and $\Upsilon$, which are both unknown in an input-output data setting.
\end{remark}

\section{Data-Driven Moment Matching via Two-Sided Interconnection}
\label{sec:Data-Driven Moment Matching by A Two-Sided Interconnection}



In this section we provide an algorithm for the estimation of $\Upsilon\Pi$ (and $(\Upsilon\Pi)^{-1}$ in Section~\ref{sec:directEstimationofYPIInverse}) from input-output data. To this end, we first obtain a preliminary result by assuming availability of the signal $d$. Then, by exploiting a surrogate system, we remove this requirement. Finally, in Section~\ref{sec:Estimation under Noise Corruption} we discuss the effect of noise on the algorithm.\\
Consider the two-sided interconnection~(\ref{eq:twosidedInterconnection}). The steady-state of $x(t)$ and the state $\omega(t)$ define a global invariant manifold described by $
\mathcal{M}=\{(x, \omega) \in \mathbb{R}^{n+\nu}:$ $x=\Pi \omega\}$, see \cite{astolfi2010model}. In particular, for all $t \geq 0$, it holds that
\begin{equation} \label{eq:xandOmega}
    x(t) = \Pi \omega(t) + e^{At}(x(0) - \Pi \omega(0)).
\end{equation}
Based on (\ref{eq:xandOmega}), the second equation in (\ref{eq:signalGenerator}) is rewritten as
\begin{equation} \label{eq:onlineEstimationofYPi0}
    d(t) = \varpi(t) +  \Upsilon x(t) =  \varpi(t) +  \Upsilon \Pi \omega(t) + \epsilon_1(t),
\end{equation}
where $\epsilon_1(t) = \Upsilon e^{At} (x(0) - \Pi \omega(0))$ is an exponentially decaying error signal. This yields
\begin{equation} \label{eq:onlineEstimationofYPi1}
    \lim_{t \to \infty} \left(d(t) - \varpi(t) - \Upsilon \Pi \omega(t)\right) = 0.
\end{equation}
We are now in a position to present a (preliminary) result that allows us to asymptotically estimate the matrix $\Upsilon \Pi$ through time-domain measurements by unrealistically having access to the signal $d$.  
\begin{lemma} \label{theorem:computeYPI}
Consider the interconnection (\ref{eq:twosidedInterconnection}) with $\varpi(0) = 0$. Suppose Assumptions~\ref{assumption:minimal}, \ref{assumption:SSPI}, \ref{assumption:SSY} and \ref{assumption:invertible} hold. Let the time-snapshot matrices $O_k \in \mathbb{R}^{p \nu \times \nu^2}$ and $P_k \in \mathbb{R}^{p \nu}$, with $p \geq \nu$, be defined as 
$$
O_k \!\!=\!\!\! 
\begin{bmatrix}
\omega(t_{k-p+1})^{\top} \!\otimes\! I_\nu \\ 
\vdots \\ 
 \omega(t_{k-1})^{\top} \otimes I_\nu \\ 
 \omega(t_{k})^{\top} \otimes I_\nu \\ 
\end{bmatrix}\!\!,\, P_k \!\!=\!\!\!
\begin{bmatrix}
d(t_{k-p+1}) \!-\! \varpi(t_{k-p+1}) \\
\vdots \\
d(t_{k-1}) - \varpi(t_{k-1}) \\
d(t_{k}) - \varpi(t_{k}) \\
\end{bmatrix}\!\!, 
$$
respectively. Then, 
\begin{equation} \label{eq:computeYPI}
    \VEC(\widetilde{\Upsilon \Pi}_k) := (O_k^\top O_k)^{-1} O_k^\top P_k,
\end{equation}

is an online estimate of $\Upsilon \Pi$, namely there exists a sequence $\{t_k\}$ such that $\widetilde{\Upsilon \Pi}_k$ is well-defined for all $k$ and
\begin{equation} \label{eq:computeYPIProperty}
    \lim_{k \to \infty} \widetilde{\Upsilon \Pi}_k = \Upsilon \Pi.
\end{equation}

\end{lemma}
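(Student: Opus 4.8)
The plan is to show that the least-squares estimate \eqref{eq:computeYPI} converges to $\VEC(\Upsilon\Pi)$ by establishing two things: (i) that $\VEC(\Upsilon\Pi)$ satisfies the linear relation $O_k \, \VEC(\Upsilon\Pi) \approx P_k$ up to an exponentially decaying residual, and (ii) that along a suitable sampling sequence $\{t_k\}$ the matrix $O_k^\top O_k$ is invertible and uniformly bounded away from singularity, so that the pseudo-inverse is well-defined and the decaying residual is not amplified. First I would use \eqref{eq:onlineEstimationofYPi0}: for each sampling instant $t_j$ we have $d(t_j) - \varpi(t_j) = \Upsilon\Pi\,\omega(t_j) + \epsilon_1(t_j)$, and using the identity $\VEC(\Upsilon\Pi\,\omega(t_j)) = (\omega(t_j)^\top \otimes I_\nu)\VEC(\Upsilon\Pi)$ this becomes, stacked over the $p$ most recent samples, $P_k = O_k \VEC(\Upsilon\Pi) + E_k$, where $E_k$ is the stacked vector of the $\epsilon_1(t_j)$'s. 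Since $\varpi(0)=0$ and (by Assumption~\ref{assumption:SSY}) $x(0)=0$, in fact $x(0)-\Pi\omega(0) = -\Pi\omega(0)$ need not vanish, but $\epsilon_1(t) = \Upsilon e^{At}(x(0)-\Pi\omega(0))$ decays exponentially because $\sigma(A)\subset\mathbb{C}_{<0}$; hence $\|E_k\| \to 0$ as the sampling times grow.

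Next I would substitute this into \eqref{eq:computeYPI}: whenever $O_k^\top O_k$ is invertible,
\begin{equation*}
\VEC(\widetilde{\Upsilon\Pi}_k) = \VEC(\Upsilon\Pi) + (O_k^\top O_k)^{-1} O_k^\top E_k,
\end{equation*}
so it suffices to prove the error term vanishes. The key structural fact is that $O_k$ is built from the samples $\omega(t_j)$ of the signal generator \eqref{eq:signalGeneratorS}, whose dynamics $\dot\omega = S\omega$ have simple eigenvalues on the imaginary axis (Assumption~\ref{assumption:SSPI}) with $(S,\omega_0)$ excitable; this is exactly the setting used in the proof of Theorem~\ref{theorem:computeCPI} in \cite{scarciotti2017data}. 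I would invoke (or re-derive, in the style of that reference) the fact that one can choose a sampling sequence $\{t_k\}$ for which the $w$ (here $p$) most recent regressor vectors $\omega(t_j)$ span $\mathbb{R}^\nu$ with a condition number bounded uniformly in $k$ — intuitively, because the quasi-periodic signal $\omega(t)$ repeatedly revisits a set of $\nu$ linearly independent directions. Because $O_k = [\,\cdots;\ \omega(t_j)^\top \otimes I_\nu;\ \cdots\,]$ is a Kronecker-structured lifting of the matrix $[\,\cdots\ \omega(t_j)\ \cdots\,]^\top$, the minimum singular value of $O_k^\top O_k$ equals that of the underlying Gram matrix $\sum_j \omega(t_j)\omega(t_j)^\top$, so $O_k$ inherits the same uniform lower bound; this makes $\|(O_k^\top O_k)^{-1} O_k^\top\|$ uniformly bounded. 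Combining the uniform bound on the pseudo-inverse with $\|E_k\|\to 0$ gives $(O_k^\top O_k)^{-1}O_k^\top E_k \to 0$, which yields \eqref{eq:computeYPIProperty}.

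The main obstacle is step (ii): the persistency-of-excitation / uniform-conditioning argument for the Kronecker-lifted regressor $O_k$, including the explicit construction of the sampling sequence $\{t_k\}$ for which $O_k$ has full column rank $\nu^2$ for all $k$ and the associated condition number does not blow up. Everything else — the Sylvester/invariant-manifold identity \eqref{eq:xandOmega}, the vectorization identity, and the exponential decay of $\epsilon_1$ — is routine. I expect the proof to essentially reduce the full-column-rank and convergence claims to Theorem~\ref{theorem:computeCPI} (or to its underlying lemma in \cite{scarciotti2017data}) applied coordinate-wise, after observing that $\rank(O_k) = \nu \cdot \rank([\,\omega(t_{k-p+1})\ \cdots\ \omega(t_k)\,])$ by the Kronecker structure, so that Assumptions~\ref{assumption:minimal} and \ref{assumption:SSPI} suffice exactly as they do there. (Assumptions~\ref{assumption:SSY} and \ref{assumption:invertible} are used only to guarantee that $\varpi$, $\Upsilon$ and $\Pi$ are the intended well-defined objects appearing in \eqref{eq:onlineEstimationofYPi0}, not in the convergence argument itself.)
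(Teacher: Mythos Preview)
Your proposal is correct and follows essentially the same route as the paper: rewrite $P_k = O_k\,\VEC(\Upsilon\Pi) + E_k$ via the vectorization identity, observe the Kronecker structure $O_k = \tilde R_k \otimes I_\nu$ so that $\rank(O_k)=\nu\cdot\rank(\tilde R_k)=\nu^2$ along the sampling sequence supplied by the excitability lemma behind Theorem~\ref{theorem:computeCPI}, and conclude by combining boundedness of the pseudo-inverse with $E_k\to 0$. Your explicit remark that one needs a uniform lower bound on the smallest singular value of $\tilde R_k^\top \tilde R_k$ (not merely full rank at each $k$) is in fact slightly more careful than the paper's phrasing, which simply asserts that boundedness of $\omega$ makes $(O_k^\top O_k)^{-1}O_k^\top$ bounded.
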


\begin{proof}
By equation \eqref{eq:onlineEstimationofYPi0}, we rewrite $P_k$ as 
$$
P_k = 
\underbrace{
\begin{bmatrix}
\Upsilon \Pi \omega(t_{k-p+1}) \\
\vdots \\
\Upsilon \Pi \omega(t_{k-1}) \\
\Upsilon \Pi \omega(t_{k}) \\
\end{bmatrix}}_{\displaystyle D_k}
+ \underbrace{\begin{bmatrix}
\epsilon_1(t_{k-p+1}) \\
\vdots \\
\epsilon_1(t_{k-1}) \\
\epsilon_1(t_{k}) \\
\end{bmatrix}}_{\displaystyle E_k}.
$$
Applying the properties\footnote{$\VEC(\mathcal{A} \mathcal{B} \mathcal{C}) = (\mathcal{C}^\top \otimes \mathcal{A}) \VEC(\mathcal{B})$, with $\mathcal{A}, \mathcal{B}$ and $\mathcal{C}$ of compatible dimensions.} of the vectorization operator on the term $D_k$ yields 
\begin{equation} \label{eq:PkRewritten}
    P_k = O_k \VEC (\Upsilon \Pi) + E_k. 
\end{equation}
By Assumption~\ref{assumption:SSPI}, \cite[Lemma 3]{padoan2016geometric} has shown that there exists a sequence $\{t_k\}$ with $\lim_{k \to \infty} t_k = \infty$ such that the rank of the matrix $\tilde{R}_k$ (as defined in Theorem~\ref{theorem:computeCPI}) equals $\nu$. Obverse that $O_k = \tilde{R}_k^\top \otimes I_\nu$. Applying a property\footnote{$\rank(\mathcal{A} \otimes \mathcal{B}) = \rank(\mathcal{A}) \rank(\mathcal{B})$.} of the Kronecker product yields that $\rank(O_k) = \rank(\tilde{R}_k^\top) \rank(I_\nu) = \nu^2$, that is, $O_k$ has full column rank. This directly implies the invertibility of $O_k^\top O_k$, and $\widetilde{\Upsilon \Pi}_k$ is therefore well-defined. Moreover, by Assumption~\ref{assumption:SSPI}, the signal $\omega(t) = e^{St}\omega(0)$ is bounded. Then, it follows directly that $(O_k^\top O_k)^{-1} O_k^\top$ is bounded in each entry. Substituting equation \eqref{eq:PkRewritten} into equation \eqref{eq:computeYPI} yields 
\begin{align*} 
    \lim_{t_k \to \infty}& \VEC (\widetilde{\Upsilon \Pi}_k) = \lim_{t_k \to \infty}  (O_k^\top O_k)^{-1} O_k^\top P_k \\
    &= \lim_{t_k \to \infty}  (O_k^\top O_k)^{-1} O_k^\top  (O_k \VEC (\Upsilon \Pi) + E_k) \\
    &= \lim_{t_k \to \infty} \left( \VEC (\Upsilon \Pi) + (O_k^\top O_k)^{-1} O_k^\top E_k \right) \\
    &= \VEC(\Upsilon \Pi) + \lim_{t_k \to \infty} (O_k^\top O_k)^{-1} O_k^\top E_k, 
\end{align*} 
Notice that, under Assumption~\ref{assumption:SSPI}, $\lim_{t_k \to \infty} E_k = 0$. Hence, the proof is complete. 
\end{proof}


Note that the result of Lemma~\ref{theorem:computeYPI} requires knowledge of the signal $d$, which is not normally available from experimental measurements. Therefore, here we present an instrumental observation which we exploit in the following to estimate the output signal $d$ using $\Upsilon B$, which we can estimate using Theorem~\ref{theorem:computeYB}. 

\begin{remark} \label{remark:dSystem}
    Note that the evolution of signal $d(t)$ can be expressed by the system described by
\begin{equation}  \label{eq:dSystem}
    \dot{d} = Qd + \Upsilon BL\omega, \qquad d(0) = 0. 
\end{equation}
This directly follows from the interconnection equation (\ref{eq:twosidedInterconnection}) and the Sylvester equation (\ref{eq:sylvesterEquationSwapped}), namely
    \begin{align*} 
        \dot{d} &= \dot{\varpi} + \Upsilon \dot{x} =  Q \varpi + RCx + \Upsilon \left(Ax + Bu \right) \\
        &=  Q \varpi + \underbrace{\left(RC + \Upsilon A\right)}_{= \,Q\Upsilon}x + \Upsilon Bu =  Q (\varpi + \Upsilon x) +  \Upsilon Bu \\
        &=  Q d +  \Upsilon Bu =  Q d +  \Upsilon BL\omega, 
    \end{align*}
with the initial condition $d(0) = \varpi(0) + \Upsilon x(0) = 0$. 
\end{remark}

This observation implies that, if we are provided with the estimation $\widetilde{\Upsilon B}_i$ computed using Theorem~\ref{theorem:computeYB}, we can construct a \textit{surrogate system} described by
\begin{equation} \label{eq:surrogateSystem}
    \dot{\hat{d}} = Q\hat{d} + \widetilde{\Upsilon B}_i L \omega, \qquad \hat{d}(0) = 0,
\end{equation}
with the property that there exists a sequence $\{t_i\}$ such that $
    \lim_{i \to \infty}\hat{d}(t) = d(t), \forall t \geq 0.$\\
The surrogate system \eqref{eq:surrogateSystem} allows using the signal $\hat{d}$ as an approximate replacement of $d$ in Lemma~\ref{theorem:computeYPI}. Note, however, that by substituting $d(t)$ with $\hat{d}(t)$, the result of Lemma~\ref{theorem:computeYPI} holds if and only if $\widetilde{\Upsilon B}_i = \Upsilon B$, which holds only when $t_i \to \infty$. For any finite time $t_i \neq \infty$, it can be seen that the error \eqref{eq:errorYB} is non-zero. 
As a direct result of $\widetilde{\Upsilon B}_i \neq \Upsilon B$, there exists a discrepancy between the actual and the estimated values of $d(t)$, which indicates that the convergence to $\Upsilon \Pi$ in \eqref{eq:computeYPIProperty} no longer holds. Nevertheless, in what follows we show that, even in this case, the estimation of $\Upsilon \Pi$ obtained using $\hat{d}(t)$ lies in a bounded neighbourhood of its exact value. We start with the following lemma. 

\begin{lemma} \label{lemma:errorDBounded}
Consider the signals $d(t)$ and $\hat{d}(t)$ described by the systems \eqref{eq:dSystem} and \eqref{eq:surrogateSystem}, respectively. Suppose Assumptions~\ref{assumption:SSPI}, \ref{assumption:SSY} and \ref{assumption:invertible} hold. Then, there exists $\beta > 0$ such that $||d(t) - \hat{d}(t)||_\infty \leq \beta$ for all $t \geq 0$.
\end{lemma}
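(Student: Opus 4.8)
The plan is to form the error dynamics by subtracting the surrogate system~\eqref{eq:surrogateSystem} from the system~\eqref{eq:dSystem} governing $d$, and then to show that the resulting linear system, driven by a decaying forcing term, produces a uniformly bounded trajectory. First I would set $e(t) := d(t) - \hat{d}(t)$. Since both systems share the matrix $Q$ and the same input $L\omega$, and both start from zero initial condition, subtracting gives
\begin{equation*}
    \dot{e} = Q e + \left(\Upsilon B - \widetilde{\Upsilon B}_i\right) L \omega, \qquad e(0) = 0.
\end{equation*}
Here one must be slightly careful about what $\widetilde{\Upsilon B}_i$ means inside the surrogate system: it is the estimate used to build~\eqref{eq:surrogateSystem}, which is produced along the sequence $\{t_i\}$ from Theorem~\ref{theorem:computeYB}; the relevant point is that the forcing coefficient is exactly the transient error $\epsilon_{\Upsilon B}^i = e^{-Qt_i}\Upsilon e^{At_i} B$ from~\eqref{eq:errorYB}.

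The key steps are then: (i) bound the forcing term, and (ii) bound the state via the variation-of-constants formula. For (i), note $\omega(t) = e^{St}\omega(0)$ is bounded for all $t\ge 0$ because $\sigma(S)\subset\mathbb{C}_0$ and $S$ has simple eigenvalues (Assumption~\ref{assumption:SSPI}), so $\|L\omega(t)\| \le c_1$ for some $c_1>0$; and the error coefficient $\|\Upsilon B - \widetilde{\Upsilon B}_i\| = \|e^{-Qt_i}\Upsilon e^{At_i} B\|$ is bounded uniformly in $i$, since $e^{-Qt_i}$ is bounded ($\sigma(Q)\subset\mathbb{C}_0$, simple eigenvalues, Assumption~\ref{assumption:SSY}), $\Upsilon$ is a fixed matrix, and $\|e^{At_i}B\|$ is bounded because $\sigma(A)\subset\mathbb{C}_{<0}$ (Assumption~\ref{assumption:SSY}). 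Hence the whole forcing signal has a uniform bound, say $\|(\Upsilon B - \widetilde{\Upsilon B}_i)L\omega(t)\| \le c_2$ for all $t\ge 0$ and all $i$. For (ii), since $Q$ has all eigenvalues on the imaginary axis and is diagonalizable, $\|e^{Qt}\| \le M$ for all $t\ge 0$; then
\begin{equation*}
    \|e(t)\| = \left\| \int_0^t e^{Q(t-\tau)} \left(\Upsilon B - \widetilde{\Upsilon B}_i\right) L\omega(\tau)\, d\tau \right\| \le M c_2\, t,
\end{equation*}
which is linear in $t$ and therefore does \emph{not} immediately give a uniform bound.

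The main obstacle is precisely this: a naive estimate only yields linear-in-$t$ growth, which is not enough. To get a genuine uniform bound I would exploit the structure of the forcing term more carefully. Because $e^{-Qt_i}\Upsilon e^{At_i}B$ decays like $e^{At_i}$ (the $e^{-Qt_i}$ factor being bounded), and because the surrogate estimate is actually updated along the convergent sequence, the cleaner route is to write $e(t)$ directly in terms of the frequency content: $L\omega(\tau)$ is a finite sum of bounded oscillatory modes $e^{\iota\,\omega_j\tau}$, and convolving each such mode against $e^{Q(t-\tau)}$ produces, under Assumption~\ref{assumption:invertible} ($\sigma(Q)\cap(\text{modes of }S)=\emptyset$, inherited since $\sigma(S)\subset\mathbb{C}_0$), a bounded steady-state term of the form $(\iota\omega_j I - Q)^{-1}(\,\cdots)e^{\iota\omega_j t}$ plus a bounded term $e^{Qt}(\cdots)$ — no resonance, hence no secular growth. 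Thus the solution of $\dot e = Qe + f(t)$ with $f$ a bounded combination of the non-resonant frequencies $\{\iota\omega_j\}\cup\{\text{eigenvalues of }Q\text{-excited modes}\}$ stays bounded for all $t$, with a bound depending only on $M$, the resolvent norms $\|(\iota\omega_j I - Q)^{-1}\|$, and the uniform forcing magnitude $c_2$. Taking $\beta$ to be the resulting constant (maximized over the finitely many $i$ that occur before the estimate stabilizes, which is harmless since each such trajectory is a continuous function that is itself bounded on $[0,\infty)$ by the same argument) completes the proof. I would present this by first deriving the error ODE, then stating the three uniform bounds from the stability assumptions, and finally invoking the non-resonance structure to rule out secular terms.
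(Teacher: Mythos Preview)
Your argument is correct and rests on the same key fact the paper uses, namely the non-resonance condition $\sigma(S)\cap\sigma(Q)=\emptyset$ from Assumption~\ref{assumption:invertible}. The difference is in execution. You derive the error ODE $\dot e = Qe + \epsilon_{\Upsilon B}^i L\omega$, attempt a variation-of-constants bound, hit the linear-in-$t$ obstacle, and then repair it by expanding $L\omega$ into its finitely many imaginary modes and computing the convolution against $e^{Q(t-\tau)}$ explicitly, using invertibility of $\iota\omega_j I - Q$ to rule out secular terms. The paper instead augments the state: writing
\[
\begin{bmatrix}\dot e\\ \dot\omega\end{bmatrix}
=\underbrace{\begin{bmatrix}Q & \epsilon_{\Upsilon B}^i L\\ 0 & S\end{bmatrix}}_{\mathcal{A}_d}
\begin{bmatrix} e\\ \omega\end{bmatrix},
\]
one observes that $\mathcal{A}_d$ is block upper triangular with $\sigma(\mathcal{A}_d)=\sigma(Q)\cup\sigma(S)$; since these sets are disjoint (Assumption~\ref{assumption:invertible}) and each consists of simple imaginary eigenvalues (Assumptions~\ref{assumption:SSPI}, \ref{assumption:SSY}), $\mathcal{A}_d$ has only simple imaginary eigenvalues, the autonomous system is marginally stable, and boundedness of $e(t)$ follows in one line. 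This packaging avoids the detour through the naive estimate and the mode-by-mode resolvent computation, though of course the underlying mechanism is identical.

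One small clean-up: your discussion of ``maximizing over the finitely many $i$ that occur before the estimate stabilizes'' is unnecessary. In the setting of the lemma the surrogate system~\eqref{eq:surrogateSystem} is built with a \emph{fixed} estimate $\widetilde{\Upsilon B}_i$ (obtained from a prior experiment), so $\epsilon_{\Upsilon B}^i$ is a single constant vector and there is only one error trajectory to bound.
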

\begin{proof}
By \eqref{eq:dSystem} and \eqref{eq:surrogateSystem}, we notice that the evolution of the error signal $\epsilon_d(t) := d(t) - \hat{d}(t)$ is governed by the interconnected system
\begin{equation} 
    \dot{\omega} = S \omega, \qquad  \dot{\epsilon_d} = Q\epsilon_d + \epsilon_{\Upsilon B}^i L \omega,
\end{equation}
with $\omega(0) = \omega_0$, $\epsilon_d(0) = 0$ and $\epsilon_{\Upsilon B}^i$ defined in (\ref{eq:errorYB}). The above system can be rewritten in a compact form as
\begin{equation} \label{eq:EpsilonDsystem}
\begin{bmatrix}
    \dot{\epsilon_d} \\
    \dot{\omega}
\end{bmatrix}
=
\underbrace{
\begin{bmatrix}
    Q & \epsilon_{\Upsilon B}^i L \\
    0 & S 
\end{bmatrix}
}_{\displaystyle\mathcal{A}_d}
\begin{bmatrix}
    \epsilon_d \\
    \omega
\end{bmatrix}.     
\end{equation}
Note that $\mathcal{A}_d \in \RR^{\nu \times \nu}$ is a block upper triangular matrix, the eigenvalues of which are exactly the eigenvalues of its diagonal blocks (see \cite[Lemma 7.1.1]{golub2013matrix}), \ie, $\sigma(\mathcal{A}_d) = \sigma(Q) \cup \sigma(S)$. Since by Assumptions \ref{assumption:SSPI} and \ref{assumption:SSY}, $S$ and $Q$ only have simple eigenvalues on the imaginary axis, and by Assumption \ref{assumption:invertible}, $\sigma(Q) \cap \sigma(S) = \emptyset$, it is obvious to see that the matrix $\mathcal{A}_d$ also has simple eigenvalues with $\sigma(\mathcal{A}_d) \subset \mathbb{C}_0$, which implies that system \eqref{eq:EpsilonDsystem} is (marginally) stable. As a direct result, the signal $\epsilon_d(t)$ is bounded for all $t \geq 0$, namely there exists a real number $\beta$ such that $||d(t) - \hat{d}(t)||_\infty \leq \beta, \forall t \geq 0$. 
\end{proof}

\begin{theorem} \label{theorem:computeYPIApproximate}
Consider the interconnection (\ref{eq:twosidedInterconnection}) with $\varpi(0) = 0$. Suppose Assumptions~\ref{assumption:minimal}, \ref{assumption:SSPI}, \ref{assumption:SSY} and \ref{assumption:invertible} hold. Let $\widetilde{\Upsilon B}_i$ be an estimate of $\Upsilon B$. Let $O_k$ be as in Lemma~\ref{theorem:computeYPI} and the matrix $\widehat{P}_k \in \mathbb{R}^{p \nu}$, with $p \geq \nu$, be defined as
$$
\widehat{P}_k = 
\begin{bmatrix}
\hat{d}(t_{k-p+1}) - \varpi(t_{k-p+1}) \\
\vdots \\
\hat{d}(t_{k-1}) - \varpi(t_{k-1}) \\
\hat{d}(t_{k}) - \varpi(t_{k}) \\
\end{bmatrix},
$$
with $\hat{d}(t)$ obtained using the surrogate system~\eqref{eq:surrogateSystem}. Then, there exists a sequence $\{t_k\}$ with $\lim_{k \to \infty} t_k = \infty$ such that, for the approximate estimate defined as
\begin{equation} \label{eq:computeYPIApproximate}
    \VEC(\widehat{\Upsilon \Pi}_k) := (O_k^\top O_k)^{-1} O_k^\top \widehat{P}_k,
\end{equation} there exists $\gamma > 0$ such that as $k$ approaches infinity,
\begin{equation} \label{eq:errorBound}
     ||\widehat{\Upsilon \Pi}_k  - \Upsilon \Pi ||_{\infty} \leq \gamma. 
\end{equation}
\end{theorem}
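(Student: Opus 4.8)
The plan is to compare the approximate estimate \eqref{eq:computeYPIApproximate} with the exact (but unrealizable) estimate of Lemma~\ref{theorem:computeYPI}, and to bound the difference uniformly in $k$ for large $k$. First I would write
$$
\VEC(\widehat{\Upsilon \Pi}_k) - \VEC(\widetilde{\Upsilon \Pi}_k) = (O_k^\top O_k)^{-1} O_k^\top (\widehat{P}_k - P_k),
$$
and observe that $\widehat{P}_k - P_k$ is exactly the stacked vector of the samples $\hat{d}(t_j) - d(t_j) = -\epsilon_d(t_j)$ for $j = k-p+1,\dots,k$. By Lemma~\ref{lemma:errorDBounded} each block has norm at most $\beta$, so $\|\widehat{P}_k - P_k\|$ is bounded by $\sqrt{p}\,\beta$ (or simply $\sqrt{p}\,\beta$ in the relevant norm) uniformly in $k$. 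Combined with the fact, already established in the proof of Lemma~\ref{theorem:computeYPI}, that $(O_k^\top O_k)^{-1} O_k^\top$ is bounded entrywise along the good sequence $\{t_k\}$ (since $\omega(t)$ is bounded and $O_k$ has full column rank $\nu^2$ along that sequence), this yields a uniform bound $\|\VEC(\widehat{\Upsilon \Pi}_k) - \VEC(\widetilde{\Upsilon \Pi}_k)\| \leq c\,\beta$ for some constant $c$ independent of $k$.

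Next I would invoke Lemma~\ref{theorem:computeYPI} itself: along the same sequence $\{t_k\}$ (the construction in \cite[Lemma 3]{padoan2016geometric} that guarantees $\rank \tilde R_k = \nu$ works for both results), $\widetilde{\Upsilon \Pi}_k \to \Upsilon \Pi$, so for all $k$ large enough $\|\VEC(\widetilde{\Upsilon \Pi}_k) - \VEC(\Upsilon \Pi)\| \leq \beta$, say. Then the triangle inequality gives
$$
\|\VEC(\widehat{\Upsilon \Pi}_k) - \VEC(\Upsilon \Pi)\| \leq c\,\beta + \beta =: \gamma'
$$
for all sufficiently large $k$, and passing from the Euclidean norm on the vectorization to the $\infty$-norm on the matrix changes $\gamma'$ only by an absolute constant, yielding \eqref{eq:errorBound} with a suitable $\gamma > 0$. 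I should be careful to state that the sequence $\{t_k\}$ is chosen once (satisfying the rank condition) and both convergence-type statements are read along it; I would also note that $\hat d$ is generated by the surrogate system \eqref{eq:surrogateSystem} for a \emph{fixed} index $i$ (fixed estimate $\widetilde{\Upsilon B}_i$), so that $\beta$ in Lemma~\ref{lemma:errorDBounded} is a genuine constant and does not interfere with the limit in $k$.

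The main obstacle, and the point that needs the most care in the write-up, is the interplay of the two indices $i$ and $k$: the bound $\beta$ depends on the frozen quality of $\widetilde{\Upsilon B}_i$ through $\epsilon_{\Upsilon B}^i$ in \eqref{eq:EpsilonDsystem}, while the limit is taken in $k$ only. I would make explicit that $i$ is held fixed throughout the $k$-limit, so that $\mathcal A_d$ and hence $\beta$ are constants, and that the claim \eqref{eq:errorBound} is therefore an asymptotic-in-$k$ statement with a constant $\gamma = \gamma(i)$; if desired one can additionally remark that $\gamma(i) \to 0$ as $i \to \infty$ because $\epsilon_{\Upsilon B}^i \to 0$ makes $\epsilon_d \equiv 0$ and hence $\beta \to 0$. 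A secondary subtlety is ensuring $O_k^\top O_k$ is invertible and its inverse is uniformly bounded along $\{t_k\}$; this is not automatic from full column rank alone, but it follows from the boundedness of $\omega(t)$ together with the uniform rank lower bound, exactly as used (implicitly) in Lemma~\ref{theorem:computeYPI}, so I would reuse that argument rather than reprove it.
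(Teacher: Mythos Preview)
Your proposal is correct and follows essentially the same approach as the paper: both arguments rely on the boundedness of $(O_k^\top O_k)^{-1}O_k^\top$ established in Lemma~\ref{theorem:computeYPI}, the uniform bound on $\|\hat d - d\|_\infty$ from Lemma~\ref{lemma:errorDBounded}, and the vanishing of the transient term $E_k$. The only cosmetic difference is that you split the error via the intermediate quantity $\widetilde{\Upsilon\Pi}_k$ and then apply the triangle inequality, whereas the paper expands $\widehat P_k = O_k\VEC(\Upsilon\Pi) + E_k + Z_k$ directly and bounds the two error terms in one line; your additional remarks on the frozen index $i$ and on $\gamma(i)\to 0$ are a welcome clarification not made explicit in the paper.
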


\begin{proof}
We have to prove that as $t_k$ approaches infinity, the error $|| \VEC(\widehat{\Upsilon \Pi}_k) - \VEC(\Upsilon \Pi)||_\infty$ is bounded by a certain constant $\gamma$. We note that 
\begin{equation} \label{eq:diffDandDhat}
    \widehat{P}_k = P_k + 
    \underbrace{
    \begin{bmatrix}
        \hat{d}(t_{k-p+1}) - d(t_{k-p+1}) \\
        \vdots \\
        \hat{d}(t_{k-1}) - d(t_{k-1}) \\
        \hat{d}(t_{k}) - d(t_{k})
    \end{bmatrix}}_{\displaystyle Z_k}. 
\end{equation}
Then it follows from \eqref{eq:diffDandDhat} and \eqref{eq:PkRewritten} that 
$$
    \widehat{P}_k = O_k \VEC (\Upsilon \Pi) + E_k + Z_k. 
$$
Substituting the above equation into \eqref{eq:computeYPIApproximate}, yields
\begin{equation*} \label{eq:infinityBound0}
    \begin{split}  
    & || \VEC (\widehat{\Upsilon \Pi}_k) - \VEC(\Upsilon \Pi)||_\infty \\
    &= || (O_k^\top O_k)^{-1} O_k^\top \widehat{P}_k - \VEC(\Upsilon \Pi) ||_\infty \\
    &= || (O_k^\top O_k)^{-1} O_k^\top  (O_k \VEC (\Upsilon \Pi)  \! + \! E_k \! - \! Z_k) \! - \! \VEC(\Upsilon \Pi) ||_\infty \\
    &= || (O_k^\top O_k)^{-1} O_k^\top  ( E_k - Z_k) ||_\infty.
\end{split}
\end{equation*}
As $t_k$ tends to $\infty$, the term $E_k$ goes to $0$ and it follows that
\begin{equation} \label{eq:infinityBound}
    \begin{split}  
     || \VEC  (\widehat{\Upsilon \Pi}_k)  - \VEC(\Upsilon \Pi)||_\infty 
    \leq  ||(O_k^\top O_k)^{-1} O_k^\top||_\infty || Z_k ||_\infty. 
\end{split}
\end{equation}
In the proof of Lemma~\ref{theorem:computeYPI}, it has been shown that the matrix $(O_k^\top O_k)^{-1} O_k^\top$ is bounded in each entry, that is, there exists $\alpha > 0$ such that $||(O_k^\top O_k)^{-1} O_k^\top||_\infty \leq \alpha, \forall t_k \geq 0$. In addition, Lemma~\ref{lemma:errorDBounded} shows that there exists $\beta > 0$ such that $||d(t_k) - \hat{d}(t_k)||_\infty \leq \beta$ for all $t_k \geq 0$, from which it follows directly that $|| Z_k ||_\infty \leq \beta, \forall t_k \geq 0$. Then, by the inequality in \eqref{eq:infinityBound}, simply selecting $\gamma = \alpha \beta$ yields that condition \eqref{eq:errorBound} holds. 
\end{proof}

In what follows, Algorithm~\ref{alg:On-LineMomentEstimationTwo-sided} details the associated computational procedures, based on Theorems \ref{theorem:computeCPI} and \ref{theorem:computeYPIApproximate}, used to estimate online the matrices $C \Pi$ and $\Upsilon \Pi$ \textit{simultaneously}.

\begin{algorithm}
  \caption{On-Line Estimation of $C \Pi$ and $\Upsilon \Pi$.}
  \label{alg:On-LineMomentEstimationTwo-sided}
  \begin{algorithmic}[1]
    \State{\textbf{Input:} time sequence $\{t_k\}$, $w = p = \nu$, \textit{sufficiently} small tolerances $\eta_{C \Pi} > 0$ and $\eta_{\Upsilon \Pi} > 0$, a \textit{sufficiently} large $k \in \mathbb{Z}_{\geq 0}$, estimation $\widetilde{\Upsilon B}_i$}
    \While{1}
    \State{/*** Estimate $C \Pi$ ***/}
    \State{construct data matrices $\tilde{R}_k$ and $\tilde{\Gamma}_k$}
    \If{rank$(\tilde{R}_k) = \nu$}
    \State{compute $\widetilde{C \Pi}_k$ \Comment{see (\ref{eq:computeCPI})}}
    \Else
    \State{increase $w$}
    \EndIf

    \State{}
    \State{/*** Estimate $\Upsilon \Pi$ ***/}
    \State{generate $\hat{d}(t_k)$ using surrogate system (\ref{eq:surrogateSystem})}
    \State{construct data matrices $O_k$ and $\widehat{P}_k$}
    \If{rank$(O_k) = \nu^2$}
    \State{compute $\widehat{\Upsilon \Pi}_k$ \Comment{see (\ref{eq:computeYPIApproximate})}}
    \Else
    \State{increase $p$}

    \EndIf
    
    \State{}
    \State{/*** Stopping Condition ***/}
    \If{$|| \widetilde{C \Pi}_k -  \widetilde{C \Pi}_{k-1} || \leq (t_k - t_{k-1})\eta_{C \Pi}$ and $\qquad \qquad \qquad $ $|| \widehat{\Upsilon \Pi}_k -  \widehat{\Upsilon \Pi}_{k-1} || \leq (t_k - t_{k-1})\eta_{\Upsilon \Pi}$} 
        \State{\textbf{break}}
    \EndIf
    \State{$k = k + 1$}
    \EndWhile
    \State{\textbf{Return:} $\widetilde{C \Pi}_{k}, \widehat{\Upsilon \Pi}_{k}$}
  \end{algorithmic}
\end{algorithm}

\medskip

\begin{remark}
The estimations of the matrices $\Upsilon B$, $C \Pi$ and $\Upsilon \Pi$ are split across two experiments. In the first one the swapped interconnection~\eqref{eq:swappedInterconnection} is employed to obtain the approximation $\widetilde{\Upsilon B}_i$ using Theorem~\ref{theorem:computeYB}. In the second the two-sided interconnection~\eqref{eq:twosidedInterconnection} is used to obtain the estimations $\widetilde{C \Pi}_k$ and $\widehat{\Upsilon \Pi}_k$ simultaneously, by Algorithm~\ref{alg:On-LineMomentEstimationTwo-sided}. 
\end{remark}




With the estimates $\widetilde{\Upsilon B}_i, \widetilde{C \Pi}_k$ and $\widehat{\Upsilon \Pi}_k$ obtained from the experiments, we are now able to determine an approximate data-driven two-sided reduced-order model.  The \textit{approximate reduced-order model of system~(\ref{eq:fullOrderSystem}) at the interpolation pair $(\sigma(S), \sigma(Q))$, at time $t_k$, and at the estimate $\widetilde{\Upsilon B}_i$}, is defined as
\begin{equation} \label{eq:dataDrivenReducedOrderModel}
    \dot{\xi} = (Q - RH_k)\xi + \widetilde{\Upsilon B}_i u, \qquad \psi = H_k \xi,
\end{equation}
with $H_k = \widetilde{C \Pi}_k (\widehat{\Upsilon \Pi}_k)^{-1}$, provided $\widehat{\Upsilon \Pi}_k$ is invertible.

\begin{remark}
If Assumption \ref{assumption:invertible2} holds (as assumed in \cite{ionescu2015two}), then there exists a sufficiently large $\bar{k}$ such that $\widehat{\Upsilon \Pi}_k$ is invertible for all $k \geq \bar{k}$. Hence, from now on, we assume the matrix $\widehat{\Upsilon \Pi}_k$ is invertible without loss of generality.
\end{remark}

\begin{remark}
The approximate model in \eqref{eq:dataDrivenReducedOrderModel} mismatches the unique two-sided reduced-order model \eqref{eq:reducedOrderModelTwo-sided1} due to the estimation errors for $\Upsilon B$, $C \Pi$ and $\Upsilon \Pi$. This mismatch can be decreased by following these two steps: first, one can obtain a more accurate estimation of $\Upsilon B$ by performing an experiment of longer duration by Theorem~\ref{theorem:computeYB}; then, smaller tolerances can be selected in Algorithm~\ref{alg:On-LineMomentEstimationTwo-sided} to obtain better approximations of $C \Pi$ and $\Upsilon \Pi$.
\end{remark}

Considering that the matrix inversion operation may amplify the estimation errors, in the next section we present a result that allows estimating the inverse $(\Upsilon \Pi)^{-1}$ directly.

\subsection{Direct Estimation of $(\Upsilon \Pi)^{-1}$} \label{sec:directEstimationofYPIInverse}
Observe that the information of $\Upsilon \Pi$ is not necessarily needed if we are able to directly obtain the matrix $(\Upsilon \Pi)^{-1}$. In this section, we present a result that allows us to estimate the inverse $(\Upsilon \Pi)^{-1}$ directly, without the need of computing the matrix inversion on the obtained estimation of $\Upsilon \Pi$ as in \eqref{eq:dataDrivenReducedOrderModel}.

\begin{corollary} \label{theorem:computeYPIInvese}
Consider the interconnection (\ref{eq:twosidedInterconnection}) with $\varpi(0) = 0$. Suppose Assumptions~\ref{assumption:minimal}, \ref{assumption:SSPI}, \ref{assumption:SSY},  \ref{assumption:invertible} and \ref{assumption:invertible2} hold. Let the time-snapshot matrices $M_k \in \RR^{p \nu \times \nu^2}$ and $N_k \in \RR^{p \nu}$, with $p \geq \nu$, be defined as
$$
M_k = 
\begin{bmatrix}
(d(t_{k-p+1})^\top - \varpi(t_{k-p+1})^\top) \otimes I_\nu \\
\vdots \\
(d(t_{k-1})^\top - \varpi(t_{k-1})^\top) \otimes I_\nu \\
(d(t_{k})^\top - \varpi(t_{k})^\top) \otimes I_\nu \\
\end{bmatrix},
$$
and
$$
N_k =
\begin{bmatrix}
\omega(t_{k-p+1})^\top & \cdots & \omega(t_{k-1})^\top & \omega(t_{k})^\top
\end{bmatrix}^\top,
$$

respectively. Assuming the full column rank of the matrix $M_k$, then 
\begin{equation} \label{eq:computeYPIInverse}
    \VEC(\widetilde{\Sigma}_k) := (M_k^\top M_k)^{-1} M_k^\top N_k,
\end{equation}

is an online estimate of $(\Upsilon \Pi)^{-1}$, namely there exists a sequence $\{t_k\}$ such that
\begin{equation*} \label{eq:computeYPIInverseProperty}
    \lim_{k \to \infty} \widetilde{\Sigma}_k = (\Upsilon \Pi)^{-1}.
\end{equation*}
\end{corollary}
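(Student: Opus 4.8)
The plan is to re-use the identity \eqref{eq:onlineEstimationofYPi0} that underpins Lemma~\ref{theorem:computeYPI}, but rearranged so that $(\Upsilon\Pi)^{-1}$ appears as the unknown instead of $\Upsilon\Pi$. Writing $\Sigma := (\Upsilon\Pi)^{-1}$, which is well defined by Assumption~\ref{assumption:invertible2}, I would start from $d(t) - \varpi(t) = \Upsilon\Pi\,\omega(t) + \epsilon_1(t)$ with $\epsilon_1(t) = \Upsilon e^{At}(x(0) - \Pi\omega(0))$ exponentially decaying, and left-multiply by $\Sigma$ to get $\Sigma\bigl(d(t)-\varpi(t)\bigr) = \omega(t) + \Sigma\epsilon_1(t)$. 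Applying the vectorization identity $\VEC(\mathcal{A}\mathcal{B}\mathcal{C}) = (\mathcal{C}^\top\otimes\mathcal{A})\VEC(\mathcal{B})$ with $\mathcal{A}=I_\nu$, $\mathcal{B}=\Sigma$ and $\mathcal{C}=d(t)-\varpi(t)$ then yields $\omega(t) = \bigl((d(t)-\varpi(t))^\top\otimes I_\nu\bigr)\VEC(\Sigma) - \Sigma\epsilon_1(t)$.

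Next I would stack this identity at the $p$ instants $t_{k-p+1},\dots,t_k$, which produces exactly $N_k = M_k\,\VEC(\Sigma) - \widehat{E}_k$, where $\widehat{E}_k$ collects the vectors $\Sigma\epsilon_1(t_{k-p+1}),\dots,\Sigma\epsilon_1(t_k)$. Under the full-column-rank hypothesis on $M_k$, the matrix $M_k^\top M_k$ is invertible, so $\widetilde{\Sigma}_k$ in \eqref{eq:computeYPIInverse} is well defined, and substituting the stacked identity gives $\VEC(\widetilde{\Sigma}_k) = \VEC(\Sigma) - (M_k^\top M_k)^{-1}M_k^\top\widehat{E}_k$. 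It then remains to show the perturbation term vanishes along a suitable sequence: (i) $\widehat{E}_k\to 0$ as $k\to\infty$, which is immediate because each $\epsilon_1(t_j)\to 0$ exponentially, $p$ is fixed, and $\Sigma$ is a fixed matrix; (ii) $(M_k^\top M_k)^{-1}M_k^\top$ is bounded in each entry along the sequence.

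For step (ii) I would argue exactly as in the proof of Lemma~\ref{theorem:computeYPI}: under Assumptions~\ref{assumption:SSPI} and \ref{assumption:SSY} the signal $\omega(t)=e^{St}\omega_0$ is bounded, hence so are $\varpi(t)$ and $d(t)$ (by Remark~\ref{remark:dSystem}, $d$ solves a marginally stable system driven by the bounded input $\Upsilon BL\omega$), so $M_k$ is bounded; moreover, by \cite[Lemma 3]{padoan2016geometric} there is a sequence $\{t_k\}$ with $t_k\to\infty$ along which the $\omega$-snapshot matrix $\tilde R_k$ has rank $\nu$, and since $d-\varpi\to\Upsilon\Pi\,\omega$ with $\Upsilon\Pi$ invertible, the Gram matrix $M_k^\top M_k$ stays bounded away from singularity along that sequence. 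Combining, there is $\alpha>0$ with $\|(M_k^\top M_k)^{-1}M_k^\top\|_\infty\le\alpha$ along $\{t_k\}$, and then (i)--(ii) give $\lim_{k\to\infty}\widetilde{\Sigma}_k=\Sigma=(\Upsilon\Pi)^{-1}$.

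The step I expect to be the main obstacle is (ii): transferring the full-rank and bounded-pseudoinverse properties from the $\omega$-snapshots to $M_k$, which is built from the \emph{combined} signal $d-\varpi$ rather than from $\omega$ directly. One must account for the transient $\epsilon_1$ so that the bound is uniform along the sequence rather than merely pointwise, and invoke Assumption~\ref{assumption:invertible2} precisely to ensure that the asymptotic linear map $\omega\mapsto\Upsilon\Pi\,\omega$ preserves non-degeneracy of the snapshot Gram matrix. Everything else is the routine vec/Kronecker bookkeeping already used throughout Section~\ref{sec:Data-Driven Moment Matching by A Two-Sided Interconnection}.
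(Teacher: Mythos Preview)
Your proposal is correct and follows essentially the same approach as the paper: multiply \eqref{eq:onlineEstimationofYPi0} by $(\Upsilon\Pi)^{-1}$, rearrange, vectorize, stack, and then invoke the same least-squares/limit argument as in Lemma~\ref{theorem:computeYPI}. The paper's own proof is terse (it writes the rearranged identity and then defers to the procedure of Lemma~\ref{theorem:computeYPI}), so you have in fact supplied more detail than the paper does. Note also that the full column rank of $M_k$ is an explicit \emph{hypothesis} of the corollary, so you need not derive it from the $\omega$-snapshots; only the boundedness of $(M_k^\top M_k)^{-1}M_k^\top$ along the sequence requires the argument you sketch in (ii), and for that the simplest route is to observe directly that $d-\varpi=\Upsilon x$ with $x$ bounded (since $A$ is Hurwitz and $\omega$ is bounded), rather than going through Remark~\ref{remark:dSystem}.
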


\begin{proof}
We observe that by multiplying $(\Upsilon \Pi)^{-1}$ on both sides of \eqref{eq:onlineEstimationofYPi0} and rearranging the equation, we have
$$
    (\Upsilon \Pi)^{-1}(d(t) - \varpi(t)) =  \omega(t) + (\Upsilon \Pi)^{-1}\epsilon_1(t),
$$
As $\epsilon_1(t) = \Upsilon e^{At} (x(0) - \Pi \omega(0))$ is exponentially decaying in time, we obtain
$$
\lim_{t \to \infty} \left((\Upsilon \Pi)^{-1}(d(t) - \varpi(t)) - \omega(t)\right) = 0.
$$
Based on the above equations, the rest of the proof follows the same procedure used in the proof of Lemma~\ref{theorem:computeYPI}. The details are hence omitted for the sake of space. 
\end{proof}

We highlight that, by obviating the need for computing the matrix inversion,  this last approach has several numerical advantages: 1) the computational complexity is reduced; 2) the approach is numerically more precise due to the fact that the matrix inversion is rather sensitive to small perturbations (thus  estimation errors could be amplified if the inversion is performed); 3) the method is numerically more stable in the subsequent stage of constructing the reduced-order model, as there is now no technical requirement for the invertibility of the online estimation of $\Upsilon \Pi$ (which may be singular for some $k$ even though $\Upsilon \Pi$ is not).\\
Corollary~\ref{theorem:computeYPIInvese} can be modified to replace $d$ with $\hat{d}$, obtaining a similar bound as (\ref{eq:errorBound}). Then, one can easily adapt Algorithm~\ref{alg:On-LineMomentEstimationTwo-sided} and the corresponding data-driven reduced-order model \eqref{eq:dataDrivenReducedOrderModel} based on the estimation $\widetilde{\Sigma}_k$ in \eqref{eq:computeYPIInverse} with minor efforts. 

\subsection{Estimation under Noise Corruption} \label{sec:Estimation under Noise Corruption}
In practice, it is very common for data-driven methods to be affected by different kinds of noises. For instance, one may have noisy measurements of $\omega(t)$ and $\varpi(t)$, which could degrade the estimation quality. We first observe that the estimate \eqref{eq:computeYPI} is a least squares (LS) solution of equation \eqref{eq:onlineEstimationofYPi1} by augmenting sufficiently informative data. This provides an optimal (in the sense of maximum likelihood) estimation for the case where $\varpi(t)$ is corrupted by additive white noises. In a more general case where both $\varpi(t)$ and $\omega(t)$ are affected under white noises, instead of the LS problem which \eqref{eq:computeYPI} solves, one can adjust the equation by solving a scaled total least squares (STLS) problem, see \cite{paige2002scaled}.\\
Another potential noisy scenario is when disturbances are injected into the interconnection~\eqref{eq:twosidedInterconnection} between the system and signal generators. Consider for instance the system
\begin{equation} \label{eq:twosidedInterconnectionNoise}
\begin{split} 
     \dot{\omega} = S \omega, \qquad \qquad \,
     \dot{x} = Ax + B(L\omega + v), \,\, \\
     \dot \varpi = Q \varpi + R (C x + z),  \qquad
     d = \varpi + \Upsilon x,
\end{split}
\end{equation}
with $v(t)  \in \RR$ and $z(t) \in \RR$ independent, zero-mean, stationary, white Gaussian noises. The analysis of the estimation of $\Upsilon \Pi$ under these two disturbances is in general complicated, although in Section \ref{sec:NumericalExampleTwo-sided} we present numerical results that indicate a certain degree of robustness in this general case. Here we present an analysis of how the estimation error for $\Upsilon B$ (based on the approach in Theorem~\ref{theorem:computeYB}) is affected by such disturbances, motivated by the fact that the error $\epsilon_{\Upsilon B}^i$ has a direct impact on the estimation error for $\Upsilon \Pi$ (and $(\Upsilon \Pi)^{-1}$ eventually). To this end, consider the swapped interconnection~\eqref{eq:swappedInterconnection} affected by noise, described by
\begin{equation} \label{eq:swappedInterconnectionNoise}
    \dot{x} = Ax + Bu, \quad \dot \varpi = Q \varpi + R (C x + z), \quad d = \varpi + \Upsilon x, 
\end{equation}
with $z(t) \in \RR$ a zero-mean, stationary, white Gaussian noise. Then the time evolution of the signal $d(t)$ is expressed as 
$
    \dot{d} = Qd + \Upsilon Bu + Rz 
$ with $d(0) = 0$. For $u(t)=\delta_0(t)$, the impulse response $d(t) = e^{Qt}\Upsilon B +   \int_0^t e^{Q(t- \tau)} R z(\tau) d\tau$. Then, it can be easily verified that, in contrast to \eqref{eq:errorYB}, the estimation error of estimate~\eqref{eq:computeMomentSingalPoint} under such noise now becomes 
$\hat{\epsilon}_{\Upsilon B}^i = \epsilon_{trans}(t_i) + \epsilon_{noise}(t_i)$ with $\epsilon_{trans}(t_i) =  e^{-Qt_i}\Upsilon e^{At_i} B$ a transient term as in \eqref{eq:errorYB} that exponentially decays to $0$, and $\epsilon_{noise}(t_i) = -\int_0^{t_i}e^{Q \tau} R z(\tau) d\tau$ an additional Brownian-like motion term induced by the noise, with initial condition $\epsilon_{noise}(0) = 0$. As a consequence of this additional term, the stopping condition of an algorithm based on Theorem~\ref{theorem:computeYB} for the estimation of $\Upsilon B$ may not be triggered. Note, however, that
for arbitrary probability $\bar{p} \in (0, 1)$, we can find a sufficiently large $\eta_1$ such that for all $i \geq 0$, $P\left(||\epsilon_{noise}(t_i) - \epsilon_{noise}(t_{i-1})|| \leq \eta_1 (t_i - t_{i-1})\right) \geq \bar{p}$. Then, given a transient tolerance $\eta_2$ such that $||\epsilon_{trans}(t_i) - \epsilon_{trans}(t_{i-1})|| \leq \eta_2 (t_i - t_{i-1})$ for all $t_i \geq \bar{t}$, it follows that, by selecting $\eta_{\Upsilon B} = \eta_1 + \eta_2$, the stopping condition of Algorithm 1 in \cite{mao2022model} can be fulfilled at all $t_i \geq \bar{t}$, with probability greater or equal to $\bar{p}$. The role of $\eta_1$ here is to guarantee the triggering of the stop condition with a certain probability, once the transient objective (\ie \, $||\epsilon_{trans}||$ decreases to a specific threshold) is achieved. 

\section{Numerical Example}
\label{sec:NumericalExampleTwo-sided}
\begin{figure}
    \centering
    \includegraphics[width=0.98\columnwidth]{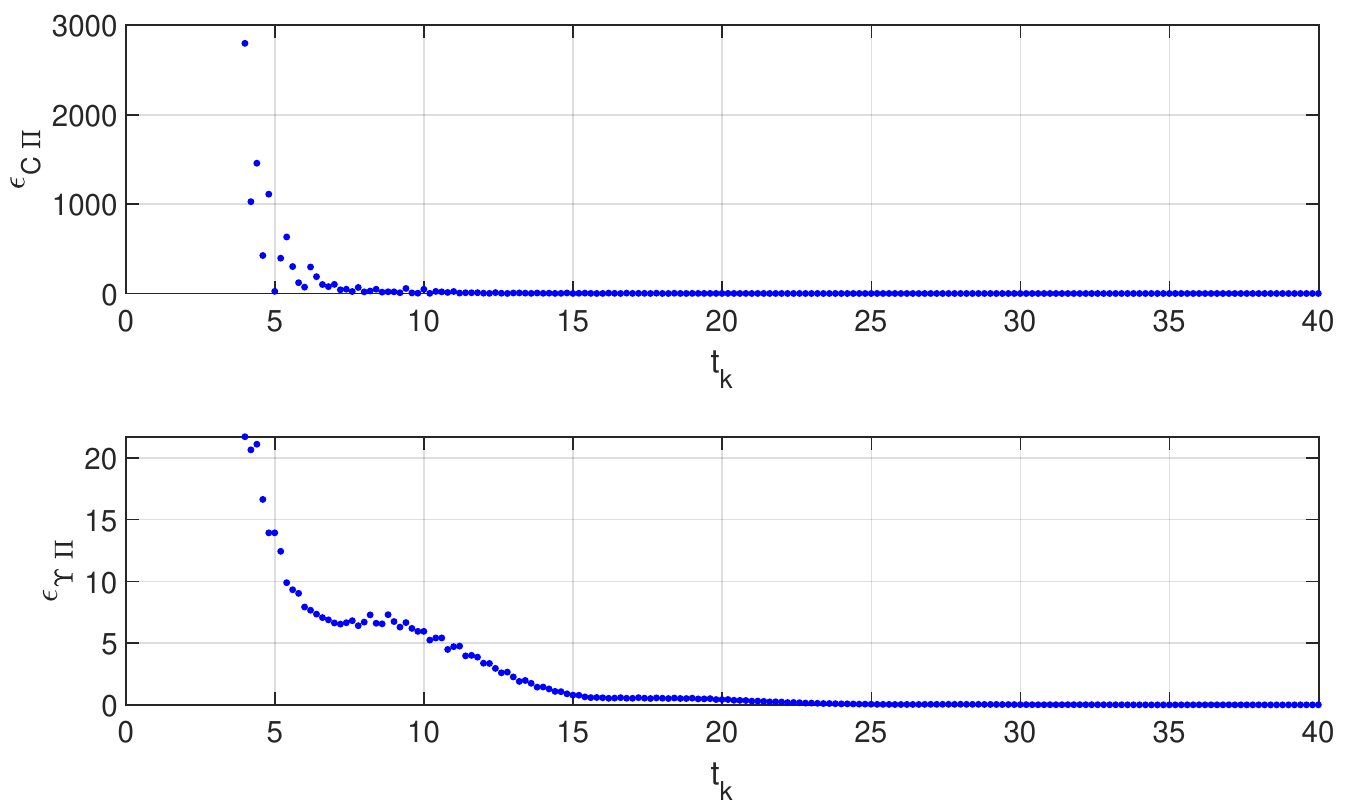}
    \caption{\textbf{Top graph}: normalized error between the matrix $C \Pi$ and the estimation $\widetilde{C \Pi}_k$ over time. \textbf{Bottom graph}: normalized error between the matrix $\Upsilon \Pi$ and the estimation $\widehat{\Upsilon \Pi}_k$ over time.} 
    \label{fig:errorPlotTwoSidedDataDriven}
\end{figure}

\begin{figure}
    \centering
    \includegraphics[trim={2.5cm 9.5cm 2cm 9.5cm},width=0.98\columnwidth]{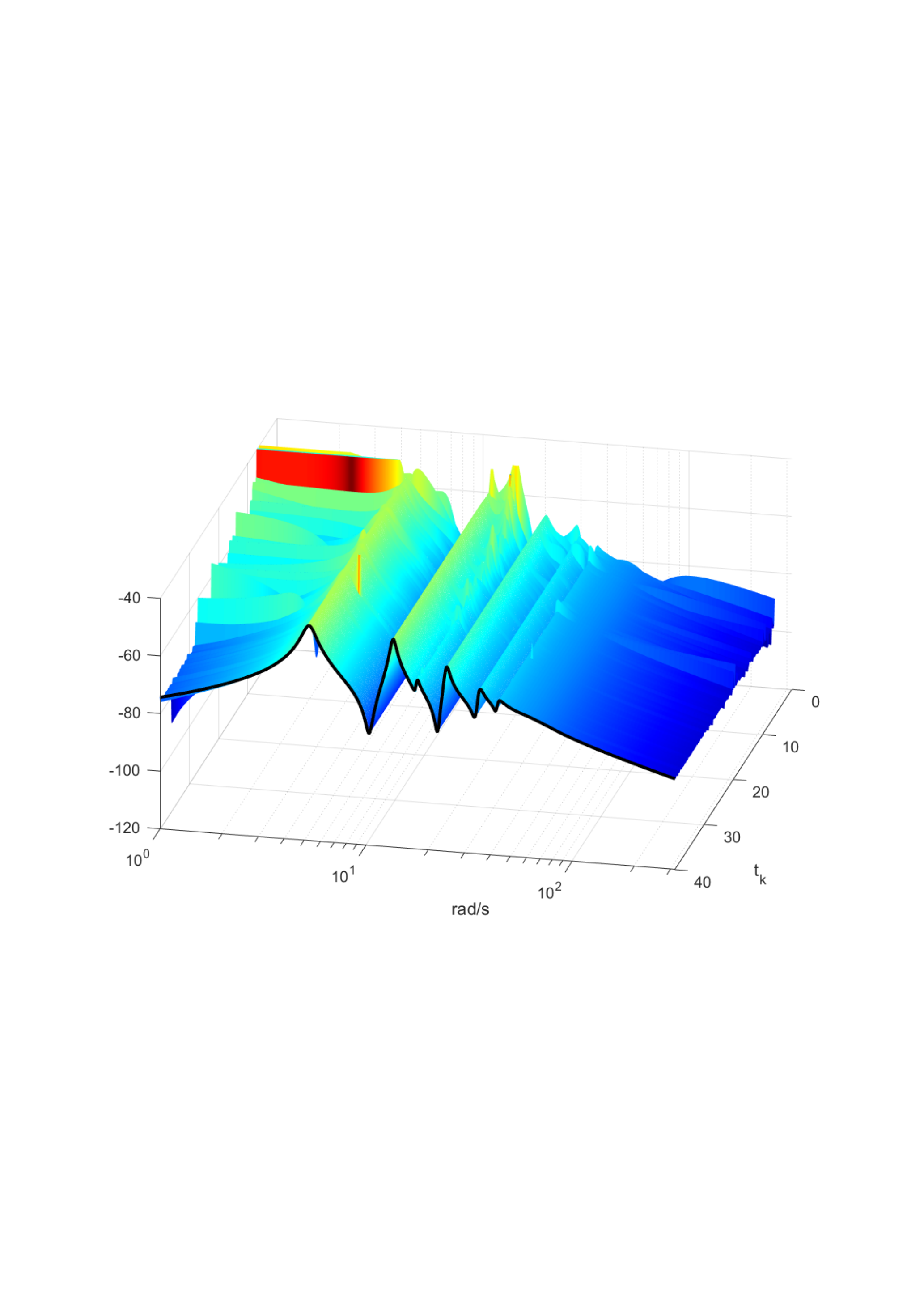}
    \caption{The mesh represents the time-history of the magnitude of the transfer function of the reduced
order model as a function of $t_k$, with $5.99 \le t_k \le 40$ s. The solid/black line indicates
the magnitude of the transfer function of the model obtained using \cite{ionescu2015two}.} 
    \label{fig:evolutionMag}
\end{figure}

\begin{figure}
    \centering
    \includegraphics[width=0.98\columnwidth,height=6.5cm]{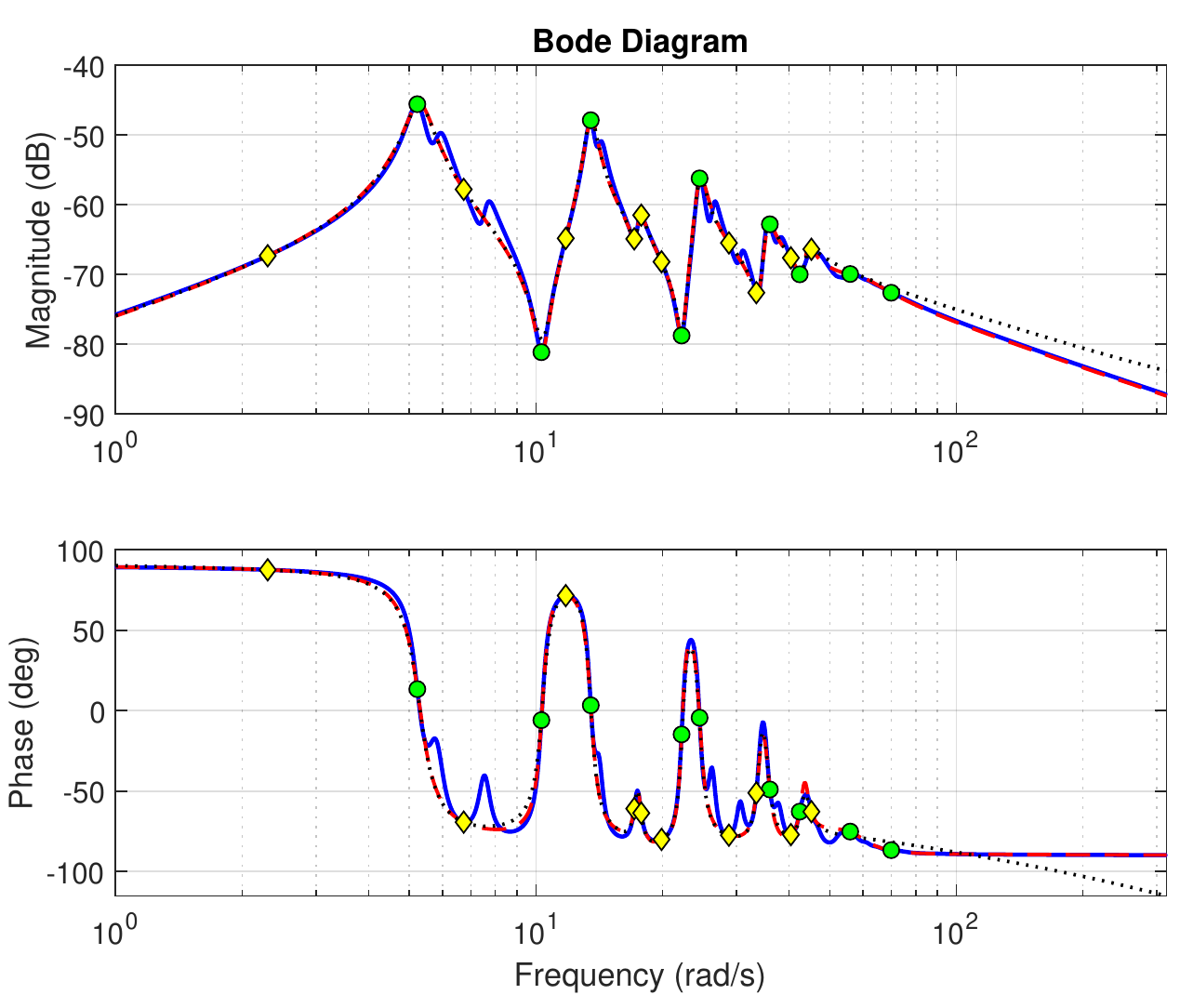}
    \caption{Bode plots of the building model (solid/blue line) and of the reduced-order models in the noise-free case (dashed/red line) and the noisy case (dotted/black line) constructed with the approximations $\widetilde{\Upsilon B}_i$($t_i = 25$ s), $\widetilde{C \Pi}_k$($t_k = 40$ s) and $\widehat{\Upsilon \Pi}_k$($t_k = 40$ s). The coloured markers represent the sets of interpolation points characterized by the matrices $S$ (yellow diamonds) and $Q$ (green circles).} 
    \label{fig:bodePlotTwoSidedDataDriven}
\end{figure}

We consider the \textit{Los Angeles University Hospital}\footnote{The data can be downloaded at \cite{SLICOT}, see \cite{antoulas2000survey}.} building model, a widely-used benchmark model, 
to illustrate the use of Algorithm \ref{alg:On-LineMomentEstimationTwo-sided}. The specifications of this model are summarized as follows: the building has $8$ floors, each with $3$ degrees of freedom, \textit{i.e.}, displacements in $x$- and $y$-directions and rotations; the dynamics of the model can be described by a state-space model of the form (\ref{eq:fullOrderSystem}) with a state of dimension $n = 48$; the model output is the state variable $x_{25}$ which corresponds to motion in the $x$-direction of the first coordinate. In this work, we consider the set of interpolation points characterized by $Q$ as $\pm 0.01 \iota, \pm 5.22 \iota, \pm 10.3 \iota, \pm 13.5 \iota, \pm 22.2 \iota, \pm 24.5 \iota, \pm 36 \iota,$ $\pm 42.4 \iota, \pm 55.9 \iota, \pm 70 \iota$ (which are mostly the major peaks of the \textit{magnitude} plot), whilst the other set characterized by $S$ is
$\pm 2.3 \iota,\pm 19.89 \iota, \pm 11.77 \iota, \pm 6.73 \iota, \pm 17.13 \iota, \pm 17.8 \iota,$ $\pm 28.77 \iota, \pm 40.4 \iota, \pm 33.43 \iota, \pm 45.2 \iota$ (which are mostly the major peaks of the \textit{phase} plot). This selection results in a reduced order of $20$ (for comparison, this benchmark is normally reduced to order 31, see \cite{antoulas2005approximation}).\\
For the moments $\Upsilon B$, using Theorem~\ref{theorem:computeYB}, we obtain a sufficiently accurate estimation $\widetilde{\Upsilon B}_i$ with $||\epsilon_{\Upsilon B}^i|| < 10^{-6}$ at the end of an experiment of $25$ seconds. Based on $\widetilde{\Upsilon B}_i$, we estimate $C \Pi$ and $\Upsilon \Pi$ using Algorithm \ref{alg:On-LineMomentEstimationTwo-sided} with an online experiment of $40$ seconds. The time sequence $\{t_k\}$ is selected equidistantly with sampling time $\Delta t = 0.1$ s. The associated normalized estimation errors over time, namely $\bar{\epsilon}_{C \Pi}(t) = \frac{||\widetilde{C \Pi}_k - C \Pi||}{|| C \Pi||}$ and $\bar{\epsilon}_{\Upsilon \Pi}(t) = \frac{||\widehat{\Upsilon \Pi}_k - \Upsilon \Pi||}{|| \Upsilon \Pi||}$, are shown in Fig. \ref{fig:errorPlotTwoSidedDataDriven}. Observe that these two error signals rapidly decay to near-zero when $t_k > 25$ s.\\
Fig.~\ref{fig:evolutionMag} shows the evolution of the Bode (magnitude) plot of the (approximate) reduced-order model obtained using the estimates $\widetilde{\Upsilon B}_i$, $\widetilde{C \Pi}_k$ and $\widetilde{\Upsilon \Pi}_k$, as a function of $t_k \in [5.99, 40]$ seconds. The phase plot, which is analogous, is omitted for the sake of space. The solid/black line represents the magnitude plot of the reduced-order model constructed with exact $\Upsilon B$, $C \Pi$ and $\Upsilon \Pi$, \textit{i.e.} by applying \cite{ionescu2015two}. The figure shows that the (approximate) reduced-order model gradually approaches the unique two-side moment-matching model as time increases. A comparison with the original model is shown in the next figure: the Bode plots of the building model (solid/blue line) and the resulting reduced-order model (dashed/red line) at $t_k = 40$ s is shown in Fig. \ref{fig:bodePlotTwoSidedDataDriven}. It can be observed that this reduced-order model achieves the two-sided moment matching by simultaneously matching the two sets of interpolation points characterized by the matrices $S$ (yellow diamonds) and $Q$ (green circles) respectively. To further demonstrate the performance of our algorithm under noises, we consider system~\eqref{eq:twosidedInterconnectionNoise} with signal-to-noise ratios for $v(t)$ and $z(t)$ of $\text{SNR}_v = \text{SNR}_z = 20$ dB. The resulting reduced model is also shown in Fig. \ref{fig:bodePlotTwoSidedDataDriven} as the dotted/black line. Observe that this model still provides a sufficiently good matching of moments at the prescribed interpolation points, showing that our algorithm presents a certain level of robustness against the white noise injections considered in Section~\ref{sec:Estimation under Noise Corruption}. 

\section*{Conclusions}
We have presented a time-domain data-driven approach to address the problem of model reduction by two-sided moment matching for linear systems. This approach does not require any knowledge of a state-space representation. Firstly, based on a ``two-sided'' interconnection, an estimation algorithm has been proposed to asymptotically approximate the ``moment-related'' matrices $C\Pi$ and $\Upsilon \Pi$. Then, through leveraging those estimations, the unique $v$-order reduced-order model that matches the moments at $2\nu$ interpolation points has been obtained. An approach that directly estimates the matrix $(\Upsilon \Pi)^{-1}$ has also been proposed. We have also discussed how the algorithm performs against some measurement noises and disturbances that may appear in the interconnection. Finally, we have demonstrated the use of the proposed algorithm by means of a widely-used benchmark model.

\bibliographystyle{ieeetr}
\bibliography{ref}           

\newcommand{\ACC}[1]{Proceedings of the #1 American Control
  Conference}\newcommand{\ECC}[1]{Proceedings of the #1 European Control
  Conference}\newcommand{\CDC}[1]{Proceedings of the #1 IEEE Conference on
  Decision and Control}
\begin{thebibliography}{10}

\bibitem{antoulas2005approximation}
A.~C. Antoulas, {\em Approximation of large-scale dynamical systems}.
\newblock SIAM, 2005.

\bibitem{kimura1986positive}
H.~Kimura, ``Positive partial realization of covariance sequences,'' {\em
  Modeling, identification and robust control}, pp.~499--513, 1986.

\bibitem{georgiou1999interpolation}
T.~T. Georgiou, ``The interpolation problem with a degree constraint,'' {\em
  IEEE Transactions on Automatic Control}, vol.~44, no.~3, pp.~631--635, 1999.

\bibitem{byrnes2001generalized}
C.~I. Byrnes, T.~T. Georgiou, and A.~Lindquist, ``A generalized entropy
  criterion for {N}evanlinna-{P}ick interpolation with degree constraint,''
  {\em IEEE Transactions on Automatic Control}, vol.~46, no.~6, pp.~822--839,
  2001.

\bibitem{astolfi2010model}
A.~Astolfi, ``Model reduction by moment matching for linear and nonlinear
  systems,'' {\em IEEE Transactions on Automatic Control}, vol.~55, no.~10,
  pp.~2321--2336, 2010.

\bibitem{scarciotti2017nonlinear}
G.~Scarciotti and A.~Astolfi, ``Nonlinear model reduction by moment matching,''
  {\em Foundations and Trends® in Systems and Control}, vol.~4, no.~3-4,
  pp.~224--409, 2017.

\bibitem{scarciotti2017data}
G.~Scarciotti and A.~Astolfi, ``Data-driven model reduction by moment matching
  for linear and nonlinear systems,'' {\em Automatica}, vol.~79, pp.~340--351,
  2017.

\bibitem{ionescu2015two}
T.~C. Ionescu, ``Two-sided time-domain moment matching for linear systems,''
  {\em IEEE Transactions on Automatic Control}, vol.~61, no.~9, pp.~2632--2637,
  2015.

\bibitem{mayo2007framework}
A.~Mayo and A.~Antoulas, ``A framework for the solution of the generalized
  realization problem,'' {\em Linear algebra and its applications}, vol.~425,
  no.~2-3, pp.~634--662, 2007.

\bibitem{simard2021nonlinear}
J.~D. Simard and A.~Astolfi, ``Nonlinear model reduction in the loewner
  framework,'' {\em IEEE Transactions on Automatic Control}, vol.~66, no.~12,
  pp.~5711--5726, 2021.

\bibitem{scherpen2020data}
A.~M. Burohman, B.~Besselink, J.~Scherpen, and M.~K. Camlibel, ``From data to
  reduced-order models via moment matching,'' {\em arXiv preprint
  arXiv:2011.00150}, 2020.

\bibitem{mao2022model}
J.~Mao and G.~Scarciotti, ``Data-driven model reduction by moment matching for
  linear systems through a swapped interconnection,'' in {\em 2022 European
  Control Conference (ECC)}, pp.~1690--1695, IEEE, 2022.

\bibitem{astolfi2010modelprojection}
A.~Astolfi, ``Model reduction by moment matching, steady-state response and
  projections,'' in {\em Proceedings of the 49th IEEE Conference on Decision
  and Control}, pp.~5344--5349, IEEE, 2010.

\bibitem{padoan2016geometric}
A.~Padoan, G.~Scarciotti, and A.~Astolfi, ``A geometric characterization of the
  persistence of excitation condition for the solutions of autonomous
  systems,'' {\em IEEE Transactions on Automatic Control}, vol.~62, no.~11,
  pp.~5666--5677, 2017.

\bibitem{golub2013matrix}
G.~H. Golub and C.~F. Van~Loan, {\em Matrix computations}.
\newblock JHU press, 2013.

\bibitem{paige2002scaled}
C.~C. Paige and Z.~Strako{\v{s}}, ``Scaled total least squares fundamentals,''
  {\em Numerische Mathematik}, vol.~91, no.~1, pp.~117--146, 2002.

\bibitem{SLICOT}
{SLICOT}, ``Benchmark examples for model reduction.''
  \url{http://slicot.org/20-site/126-benchmark-examples-for-model-reduction},
  2002.
\newblock Accessed: 2022-11-25.

\bibitem{antoulas2000survey}
A.~C. Antoulas, D.~C. Sorensen, and S.~Gugercin, ``A survey of model reduction
  methods for large-scale systems,'' tech. rep., 2000.

\end{thebibliography}

\end{document}